\documentclass[epsf]{iopart}
\expandafter\let\csname equation*\endcsname\relax
\expandafter\let\csname endequation*\endcsname\relax
\usepackage{iopams,amsfonts,amsthm,rotating,graphicx,fancybox,fancyhdr,bm,eucal,dsfont,appendix,mathtools}
\usepackage{datetime,esint,color}
\usepackage{accents}
\usepackage{wasysym}
\usepackage{cite}
\usepackage[normalem]{ulem}
\usepackage{booktabs, siunitx}
\usepackage{caption}
\usepackage{rotating}

\eqnobysec
\settimeformat{hhmmsstime}
\ddmmyyyydate

\theoremstyle{plain}
\newtheorem{theorem}{Theorem}[section]
\newtheorem{lemma}[theorem]{Lemma}
\newtheorem{proposition}[theorem]{Proposition}

\theoremstyle{definition}

\newtheorem{remark}[theorem]{Remark}

\newtheorem*{conjectureA}{Conjecture A}
\newtheorem*{conjectureB}{Conjecture B}
\newtheorem*{conjectureC}{Conjecture C}

\makeatletter
\renewcommand\subsection{\@startsection{subsection}{2}{\z@}%
  {-18pt plus -4pt minus -4pt}%
  {8pt plus 2pt minus 2pt}%
  {\normalfont\normalsize\bfseries}}
\makeatother

\begin{document}

\title{On the asymptotic duality of spectral variances in random matrix theory and the ``$1/6$'' formula}

\author{\hspace{-2cm}}

\author{\hspace{-2cm}Peng Tian$^{1}$, Roman Riser$^{2,3}$ and Eugene Kanzieper$^3$\,\footnote[1]{Corresponding author.}}

\address{
\hspace{-2cm}$^1$~MAP5, Universit\'e Paris Cit\'e, Paris 75006, France\\
\hspace{-2cm}$^{2}$~Department of Mathematics, Tulane University, New Orleans, LA 70118, United States\\
\hspace{-2cm}$^3$~School of Mathematical Sciences, Holon Institute of Technology, Holon 5810201, Israel\\
}
\noindent\newline\newline
\begin{abstract}
A ``mysterious'' relation between the number variance and the variance of the $L$-th ordered eigenvalue, first suggested by French et al.~[Ann.~Phys.~{\bf 113}, 277 (1978)], is revisited and proven to be asymptotically exact for the $\beta=2$ Dyson symmetry class. Central to the proof is a previously unknown sum rule for the level spacing auto-covariances. Its derivation hinges on our previous work on the power spectrum description of eigenvalue fluctuations in random matrix theory. Analytical results for $\beta=2$ are complemented by conjectural extensions to the $\beta=1$ and $\beta=4$ symmetry classes. Our findings are corroborated by a comprehensive numerical analysis.
\end{abstract}
\newpage
\section{Introduction}\label{I-section}

A vast amount of experimental, numerical, and theoretical evidence~\cite{GMGW-1998,S-1999,H-2001} suggests that the spectra of `maximally chaotic' quantum systems exhibit a remarkable statistical universality~\cite{B-1987}, stemming from underlying symmetries rather than from the microscopic peculiarities of individual systems. These universal fluctuation laws are accurately described by (infinite-dimensional) random matrix theory~\cite{M-2004,PF-book} at both short- and long-range energy scales. To isolate these universal fluctuations, the system-specific mean level density must be eliminated from both empirical data and theory. In the latter case, this is achieved through the unfolding procedure~\cite{M-2004}
\begin{eqnarray} \label{unfolding}
    \lambda_\ell = \int_{-\infty}^{\varepsilon_\ell} d\epsilon \,\varrho_N(\epsilon), \quad \ell=1,\dots,N,
\end{eqnarray}
that maps the spectrum $\{ \varepsilon_1,\dots,\varepsilon_N\}$ of a finite-$N$ parent random matrix model onto a set $\{ \lambda_1,\dots,\lambda_N\}$ of unfolded eigenlevels, using $\varrho_N(\epsilon)$, its mean level density. In the limit $N \rightarrow \infty$, the fluctuations in the unfolded spectrum -- which essentially represents an infinite-dimensional ${\rm Sine}_\beta$ point process~\cite{KS-2009,VV-2017,MNN-2019} -- are then expected~\cite{BGS-1984} to obey universal statistical laws observed in fully chaotic quantum systems. If the mean level spacing in the unfolded spectra is set to unity, these universal laws depend on a single parameter -- the Dyson symmetry index $\beta$, which takes the values $1$, $2$ and $4$, corresponding, respectively, to the orthogonal, unitary, and symplectic symmetry classes in the Dyson-Mehta terminology~\cite{M-2004}. Since the inception of random matrix theory, a variety of statistical measures have been devised to accurately study fluctuations in the bulk spectra of bounded quantum systems. While the traditional classification distinguishes between short- and long-range statistical indicators (highlighting spectral correlations on the local and global energy scales, respectively), we follow Ref.~\cite{RTK-2023} and assign them to two alternative classes of {\it ordinary} and {\it ordered} linear level statistics, depending on whether the ordering of sampled eigenvalues is essential for the definition of the corresponding random variable. The two classes clearly differ from each other on a formal level due to the much different mathematical structures at their core. (i) The ordinary linear spectral statistics deal with the fluctuation properties of a random variable
\begin{eqnarray} \label{ordinary-rv}
    X_N({\bm \lambda}) = \sum_{\ell=1}^{N} f_N(\lambda_\ell),
\end{eqnarray}
where $f_N(\lambda)$ is a test function of interest and ${\bm \lambda} = \{\lambda_1,\dots,\lambda_N\}$ are (possibly unfolded) eigenvalues of an $N\times N$ random matrix. Since
$X_N({\bm \lambda})$ treats each eigenvalue equally, it characterizes the energy spectrum as a whole and primarily probes its global texture. Being invariant under any permutation of individual eigenvalues, the ordinary level statistics possess a high degree of analytical tractability --- the calculation of any finite moment is a {\it`finite-complexity'} task as it requires knowledge of spectral correlation functions only of finite order. Traditional examples that have predominated in the literature include~\cite{M-2004} two-point (and higher-order) correlation functions, the spectral form factor, the number variance -- defined as the variance of the number ${\mathcal N}(s)$ of levels in an interval of length $s$ -- along with its higher cumulants, and the Dyson--Mehta $\Delta_3$ statistic. In the context of this paper, we quote non-perturbative formulas for the number variance ${\rm var}_\beta [{\mathcal N}(s)]$,
see Chapter~16.1 of Ref.~\cite{M-2004} and also~\cite{Footnote-01,BFFMPW-1981}:
\begin{eqnarray} \label{var-2} \fl\qquad
    {\rm var}_2 [{\mathcal N}(s)] = \frac{1}{\pi^2}\Big( 1+\gamma+\log(2\pi s)\Big) \nonumber\\
    \fl\qquad\qquad\qquad
    - \frac{1}{\pi^2}
    \left\{
        \cos(2\pi s) + {\rm Ci}(2\pi s) + 2\pi s \left(
            {\rm Si}(2\pi s) - \frac{\pi}{2}
        \right)
    \right\},
\end{eqnarray}
\begin{eqnarray} \label{var-1} \fl\qquad
    {\rm var}_1 [{\mathcal N}(s)] = 2 \,{\rm var}_2 [{\mathcal N}(s)] + \frac{{\rm Si}(\pi s)}{\pi} \left(
    \frac{{\rm Si}(\pi s)}{\pi}  - 1
    \right),
\end{eqnarray}
and
\begin{eqnarray} \label{var-4} \fl\qquad
    {\rm var}_4 [{\mathcal N}(s)] = \frac{1}{2} \,{\rm var}_2 [{\mathcal N}(2s)] + \left(\frac{{\rm Si}(2\pi s)}{2\pi} \right)^2.
\end{eqnarray}
Equations~(\ref{var-2})--(\ref{var-4}) refer to the unfolded spectrum of infinite-dimensional random matrices associated with the ${\rm Sine}_\beta$ point process~\cite{KS-2009,VV-2017,MNN-2019} possessing the unit mean local density, $\rho=1$. (ii) In stark contrast, by targeting individual eigenlevels, the ordered level statistics inherently presuppose a sorted spectrum. They can be represented by a random variable
\begin{eqnarray}\label{ord-stat}
X_N({\bm \lambda};{\bm c}) = \sum_{\ell=1}^N c_\ell f_N(\lambda_\ell),
\end{eqnarray}
where the eigenvalues are ordered, $\lambda_1 \le \dots \le \lambda_N$, and the sequence of weights ${{\bm c}=\{c_1,\dots,c_N\}}$ is not constant, ${\bm c}\neq c {\mathds 1}_N$. The latter is a game-changer, as it generically causes the finite moments of such statistics to depend on spectral correlation functions of {\it all} orders. This turns the calculation of their finite moments into an {\it `$\infty$-complexity'} task, comparable in difficulty to deriving the full distributions of ordinary statistics. As a consequence, quantifying even basic fluctuations of ordered level statistics typically requires the far more sophisticated machinery~\cite{AvM-2001} rooted in the theory of nonlinear integrable lattices and Painlev\'e transcendents, partly explaining why this remains relatively under-explored territory. Consecutive level spacings~\cite{M-2004,PF-book} are a paradigmatic example of ordered statistics. Proposed at the dawn of random matrix theory~\cite{W-1956} as a very intuitive indicator, the level spacing distribution~\cite{G-1961} has become but one of an entire toolbox of ordered level statistics designed to provide complementary insights into spectral fluctuations. Other examples include the auto-covariances of level spacings~\cite{BLS-2001,RTK-2023}, the closely related statistics of local level spacings~\cite{TRK-2024}, the power spectrum of spacings~\cite{O-1987,RTK-2023,TRK-2024} and eigenlevel sequences~\cite{RGMRF-2002,FGMMRR-2004,ROK-2017,ROK-2020,RK-2021,RK-2023,FW-2025}, and $r$-statistics~\cite{OH-2007,ABGR-2013}. Although ordinary and ordered spectral statistics carry fundamentally different information -- the latter being considerably richer at the level of their finite moments -- exact relations between the two are not inconceivable. The uni-directional identity
\begin{eqnarray} \label{ex-fin-inf}
R_2(s) = \sum_{\ell = 1}^\infty p_\beta(\ell;s)
\end{eqnarray}
connecting the two-point correlation function~\cite{M-2004} $R_2(s)$ -- an ordinary statistic of {\it finite complexity} -- to the set $\{ p_\beta(\ell;s) \}$ of $\ell$-th neighbor spacing densities (ordered statistics) is an instructive example. The mechanism underlying uni-directional relations of this type is clear: Summation over the ordering mark $\ell$ effectively restores permutational invariance discussed below Eq.~(\ref{ordinary-rv}). Bi-directional identities can also emerge, provided that the ordinary statistics exhibit infinite complexity. The relationship
\begin{eqnarray}\label{ex-inf-inf}
    p_\beta(k;s) = \frac{d^2}{ds^2} \sum_{\ell=0}^{k-1} E_\beta(\ell;s), \;
k\ge 1,
\end{eqnarray}
between the $k$-th neighbor spacing density (ordered) and the counting probabilities $\{ E_\beta(\ell;s)\}$ (ordinary) serves as a classical example. While the origin of the bi-directional formula Eq.~(\ref{ex-inf-inf}) and analogous identities is well understood, there exists a candidate for a somewhat intriguing bi-directional relation between the number variance ${\rm var}_\beta [{\mathcal N}(L)]$ -- {\it an ordinary statistic of finite complexity} -- and the variance of the $L$-th {\it ordered} eigenvalue ${\rm var}_\beta [\lambda_L]$ -- first proposed by French, Mello, and Pandey~\cite{FMP-1978} nearly half a century ago. Using heuristic arguments, they deduced the formula
\begin{eqnarray} \label{FMP-heuristics}
        {\rm var}_\beta [{\mathcal N}(L)] \approx  {\rm var}_\beta [\lambda_L] + \frac{1}{6}
\end{eqnarray}
which should hold ``to good precision''~\cite{BFFMPW-1981} for integers $L\gg 1$. Even though the intuition behind a relation between these two variances is clear -- spectral rigidity of the ${\rm Sine}_\beta$ process couples fluctuations of the counting function on the scale $L$ to the displacement of the $L$-th ordered level -- turning this idea into a precise quantitative statement is not straightforward. Early numerical verifications of Eq.~(\ref{FMP-heuristics}) indicated that -- contrary to expectations -- the proposed relation appeared to hold better for small values $1 \le L \le 5$ (see Section~V.C in Ref.~\cite{BFFMPW-1981}) than for larger $L$, where the discrepancy between the two variances was reported to increase, see Chapter~14 in Ref.~\cite{PF-book}. These observations only added to the uncertainty surrounding the validity of the identity. Indeed, its status remained elusive even to the authors themselves, who acknowledged~\cite{BFFMPW-1981}:
\begin{quote}
{\it ``There are certain mysteries connected with [Eq.~(\ref{FMP-heuristics})], which [...] is of more consequence than might appear; it would be good to have a better derivation and understanding of it.''}
\end{quote}

In the present paper, we revisit the relation between these two spectral variances. Transcending the heuristic arguments of the 1970s and employing a technique based on our earlier work on the power spectrum description of eigenvalue fluctuations in random matrix theory, not only do we prove that Eq.~(\ref{FMP-heuristics}), taken at $\beta=2$, is exact in the limit
\begin{eqnarray} \label{TRK-asymptotic}
        \lim_{L\rightarrow \infty} \Big( {\rm var}_\beta [{\mathcal N}(L)] -  {\rm var}_\beta [\lambda_L] \Big) = \frac{1}{6},\quad \beta=2
\end{eqnarray}
but we also determine the explicit convergence trajectory to the universal limit as $L \rightarrow\infty$. For the orthogonal ($\beta=1$) and symplectic ($\beta=4$) symmetry classes, conjectural extensions of these results are corroborated by a comprehensive, high-accuracy numerical analysis. 

The remainder of this paper is organized as follows. Section~\ref{MRD-section} summarizes our main analytical results and conjectures, including the asymptotic expansion of the spectral variance difference (Theorem~\ref{main-T} and Conjecture~A), the asymptotic behavior of the variance of the $L$-th ordered eigenvalue (Proposition~\ref{main-L} and Conjecture~B), and the new sum rule for level-spacing auto-covariances (Proposition~\ref{main-P} and Conjecture~C). Section~\ref{D-section} presents a comprehensive numerical analysis. There, we use Bornemann’s toolbox \texttt{RMTFredholm} for MATLAB~\cite{B-2010} -- based on the direct quadrature discretization of Fredholm determinants -- to generate the high-precision spectral data needed to test our analytical predictions and conjectured convergence laws. The proofs are collected in Section~\ref{proofs-section}. In Section~\ref{AC-section}, we establish the sum rule for level-spacing auto-covariances. In Section~\ref{Duality-section}, we derive the asymptotic expansion of the variance of the $L$-th ordered eigenvalue and complete the proof of the ``$1/6$'' formula. Finally, Section~\ref{technical-section} provides the auxiliary small-$\omega$ expansion of the counting-function generating integral, supplying a lemma used in the proof of Proposition~\ref{main-P}.

\section{Main results}\label{MRD-section}

In this section, we present the analytical findings and conjectures that quantify the asymptotic relationship between global fluctuations of the counting function on the one hand and the local fluctuations of an individual level on the other. Our primary contribution, Theorem~\ref{main-T}, establishes the precise convergence law of the variance difference ${\rm var}_2 [{\mathcal N}(L)] -  {\rm var}_2 [\lambda_L]$ to the universal $1/6$ constant for the unitary symmetry class, utilizing an asymptotic expansion for the $L$-th ordered eigenvalue variance (Proposition~\ref{main-L}) and a previously unknown sum rule for level-spacing auto-covariances (Proposition~\ref{main-P}). Theorem~\ref{main-T} follows from Proposition~\ref{main-L}, whose proof in turn depends on Proposition~\ref{main-P}. All three results are formulated below, together with their conjectural extensions to the $\beta=1$ and $\beta=4$ symmetry classes (Conjectures~A, B and C).
\begin{theorem}\label{main-T}
  For large positive integers $L$, the following asymptotic expansion holds:
  \begin{eqnarray} \label{main-T-eq}\fl \qquad
  {\rm var}_2 [{\mathcal N}(L)] -  {\rm var}_2 [\lambda_L]  = \frac{1}{6}  + \frac{1}{2 \pi^4 L^2} \Big(
    \log(2 \pi L) + \gamma - \frac{\pi^2 + 9}{6}
  \Big) + o(L^{-2}),
\end{eqnarray}
where $\gamma=0.57721\dots$ is the Euler–Mascheroni constant.
\end{theorem}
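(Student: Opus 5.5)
The theorem follows by subtracting two asymptotic expansions: the exact number variance formula (\ref{var-2}) at $s=L$, and the expansion of ${\rm var}_2[\lambda_L]$ that the paper announces as Proposition~\ref{main-L}. The plan is therefore to get the large-$s$ asymptotics of (\ref{var-2}) to order $s^{-2}$, and to derive Proposition~\ref{main-L} from the level-spacing auto-covariance sum rule (Proposition~\ref{main-P}).

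\emph{Step 1: large-$s$ asymptotics of (\ref{var-2}).} Use the standard expansions
\[
{\rm Si}(x)-\frac{\pi}{2}=-\frac{\cos x}{x}\Big(1-\frac{2}{x^2}+\dots\Big)-\frac{\sin x}{x^2}\Big(1-\frac{6}{x^2}+\dots\Big),
\]
\[
{\rm Ci}(x)=\frac{\sin x}{x}\Big(1-\frac{2}{x^2}+\dots\Big)-\frac{\cos x}{x^2}\Big(1-\frac{6}{x^2}+\dots\Big),
\]
with $x=2\pi s$. In the brace, $x({\rm Si}-\pi/2)$ contributes $-\cos x - \sin x/x + 2\cos x/x^2+\dots$. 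This cancels the bare $\cos x$, and the $\pm\sin x/x$ terms cancel against those from ${\rm Ci}$. What remains is $\cos x/x^2+O(x^{-4})$. At integer $s=L$ we have $\cos(2\pi L)=1$, so
\[
{\rm var}_2[{\mathcal N}(L)]=\frac{1}{\pi^2}\big(\log(2\pi L)+1+\gamma\big)-\frac{1}{4\pi^4L^2}+O(L^{-4}).
\]
This step is routine.

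\emph{Step 2: the expansion of ${\rm var}_2[\lambda_L]$.} Place the reference point at the origin and write $\lambda_L=\lambda_0+\sum_{\ell=1}^{L} s_\ell$, where $s_\ell$ are the consecutive spacings. Then
\[
{\rm var}[\lambda_L]=\sum_{|j|<L}(L-|j|)\,{\rm cov}(s_0,s_j)
\]
plus the contribution of the anchor. I will express the covariances through the power spectrum of spacings from our earlier work, whose small-$\omega$ behaviour is governed by the counting-function generating integral (the lemma of Section~\ref{technical-section}). Proposition~\ref{main-P} fixes the sums $\sum_j {\rm cov}(s_0,s_j)$ and $\sum_j |j|\,{\rm cov}(s_0,s_j)$ that appear in the constant term. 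The large-$j$ tail, ${\rm cov}(s_0,s_j)\sim -1/(\pi^2 j^2)$ together with its subleading corrections, produces the $\pi^{-2}\log L$ term and the $L^{-2}\log L$ correction. The target is
\[
{\rm var}_2[\lambda_L]=\frac{1}{\pi^2}\big(\log(2\pi L)+1+\gamma\big)-\frac16-\frac{1}{4\pi^4L^2}-\frac{1}{2\pi^4L^2}\Big(\log(2\pi L)+\gamma-\frac{\pi^2+9}{6}\Big)+o(L^{-2}).
\]
I expect this form precisely because subtracting Step 1 must reproduce (\ref{main-T-eq}).

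\emph{Step 3: subtraction.} Subtracting the Step~2 expansion from the Step~1 expansion, the logarithmic terms and the $-1/(4\pi^4L^2)$ terms cancel, leaving (\ref{main-T-eq}).

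\emph{Main obstacle.} The hard part is Step 2. The constant $1/6$ comes only from the exact sum rule of Proposition~\ref{main-P}. It reflects a subtle correction relating the position of the $L$-th level to the counting function; I expect it to be of the form $\mathbb{E}[\{\text{fractional offset}\}^2]$-type terms, i.e.\ $\int_0^1 u^2\,du-\tfrac14\cdot\ldots$, which is the heuristic source of $\tfrac16$. Obtaining the $L^{-2}$ coefficient additionally requires the second-order small-$\omega$ terms of the power spectrum, including the $\omega^2\log\omega$ term that produces $\log(2\pi L)/L^2$. To extract these cleanly I would first apply a Tauberian/Fourier argument to the power spectrum, controlling the error uniformly to get the $o(L^{-2})$ remainder.
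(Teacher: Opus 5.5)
Your Steps~1 and~3 are exactly the paper's proof of the theorem. The paper combines Proposition~\ref{main-L} with the integer-$L$ expansion Eq.~(\ref{var-2-int-L-exp}), which your Step~1 reproduces. Your target for ${\rm var}_2[\lambda_L]$ is algebraically identical to Eq.~(\ref{var-2-lambda-L}), since $-\frac14-\frac12\big(\log(2\pi L)+\gamma-\frac{\pi^2+9}{6}\big)=-\frac12\big(\log(2\pi L)+\gamma-\frac{\pi^2+6}{6}\big)$. So if you cite Proposition~\ref{main-L}, your argument is complete. One small slip in Step~1: the brace in Eq.~(\ref{var-2}) equals $\cos x/x^2+4\sin x/x^3+O(x^{-4})$, not $\cos x/x^2+O(x^{-4})$. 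The $x^{-3}$ term is harmless only because $\sin(2\pi L)=0$ at integer $L$.

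Step~2, however, is not a derivation, and its specifics would derail an actual computation. You read the target off from Eq.~(\ref{main-T-eq}), which is circular unless you cite Proposition~\ref{main-L}. Beyond that, several details are wrong:
\begin{itemize}
\item There is no anchor contribution. $\lambda_L$ is measured from the reference level $\lambda_0=0$, so Eq.~(\ref{L-th-ordered}) gives exactly ${\rm var}_2[\lambda_L]=L\sigma_0^{(2)}(L)-\sigma_1^{(2)}(L)$, Eq.~(\ref{var-theta-L}).
\item For $\beta=2$ the Dyson tail is $\delta I_\ell^{(2)}\simeq-1/(2\pi^2\ell^2)$, not $-1/(\pi^2\ell^2)$. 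With your coefficient, $-\sigma_1^{(2)}$ would produce $(2/\pi^2)\log L$, contradicting your own target.
\item $\sum_j{\rm cov}(s_0,s_j)=0$ is Pandey's rule, Eq.~(\ref{Pandey}), not Proposition~\ref{main-P}. Meanwhile $\sum_j|j|\,{\rm cov}(s_0,s_j)$ diverges logarithmically. Proposition~\ref{main-P} fixes only the regularized constant $C_2^{(1)}$, and the subtracted $1/(2\pi^2\ell^2)$ is what produces $-\pi^{-2}H_{L-1}$, i.e.\ the $\log L$, in Eq.~(\ref{sigma-1}).
\item The $L^{-2}\log L$ term comes from the $\ell^{-4}\log\ell$ correction in Eq.~(\ref{beyond-Dyson}). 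In the power spectrum this is an $|\omega|^3\log|\omega|$ term, cf.\ Eq.~(\ref{real-integral-eq}), not an $\omega^2\log\omega$ term.
\item No Tauberian argument is needed. Eq.~(\ref{beyond-Dyson}) already controls $\delta I_\ell^{(2)}$ to $o(\ell^{-4})$. Euler--Maclaurin applied to $\sum_{\ell\ge L}\delta I_\ell^{(2)}$ and $\sum_{\ell\ge L}\ell\big(\delta I_\ell^{(2)}+1/(2\pi^2\ell^2)\big)$ then yields Eqs.~(\ref{L-sigma-0}) and (\ref{sigma-1-asymp}) with $o(L^{-2})$ errors.
\item The $1/6$ is not a fractional-offset effect. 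The $-\frac16$ in ${\rm var}_2[\lambda_L]$ is $-2\cdot\frac{1}{12}$ from the term $-2C_2^{(1)}$. That $\frac{1}{12}$ comes from $-\frac12\cot(\omega/2)=-1/\omega+\omega/12+O(\omega^3)$ in the sine series of Lemma~\ref{Fourier-lemma}.
\end{itemize}
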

\noindent
For a proof, see Section~\ref{Duality-section}.

\begin{conjectureA}\label{Conj-A}
For $\beta=1$ and $4$, we put forward the conjecture:
  \begin{eqnarray} \label{Conj-1-eq}
  {\rm var}_\beta [{\mathcal N}(L)] -  {\rm var}_\beta [\lambda_L]  = \frac{1}{6}- \frac{\delta_{\beta, 4}}{8\pi^2 L} +
  O\left(\frac{\log L}{L^2}\right).
\end{eqnarray}
\end{conjectureA}
\noindent
\begin{proposition}\label{main-L}
Let $\lambda_L$ be the $L$-th ordered eigenvalue in the infinite ordered spectrum, see Eq.~(\ref{L-th-ordered}). Then, as $L \rightarrow\infty$, its variance admits the asymptotic expansion:
\begin{eqnarray} \label{var-2-lambda-L} \fl\qquad\qquad
    {\rm var}_2[\lambda_L] = \frac{1}{\pi^2}\Big(\log (2\pi L) +\gamma+1
    \Big) -\frac{1}{6} \nonumber\\
    \fl\qquad\qquad\qquad\qquad
    - \frac{1}{2\pi^4 L^2}
    \Big(\log(2\pi L) + \gamma- \frac{\pi^2+6}{6}
    \Big) + o(L^{-2}).
\end{eqnarray}
\end{proposition}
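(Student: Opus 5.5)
The plan is to write $\lambda_L$ as a sum of spacings and to control its variance through the spacing auto-covariances. Fix the labelling of the infinite ordered spectrum so that $\lambda_0$ is the reference level (the one conditioned at the origin, or the first level to the right of it, per Eq.~(\ref{L-th-ordered})). Then $\lambda_L=\sum_{\ell=1}^{L}s_\ell$ plus a fixed offset, where $s_\ell=\lambda_\ell-\lambda_{\ell-1}$ are consecutive spacings. By stationarity of the Sine$_2$ process (Palm-measure stationarity under index shift),
\begin{eqnarray*}
{\rm var}_2[\lambda_L]=L\,{\rm cov}(s_1,s_1)+2\sum_{k=1}^{L-1}(L-k)\,{\rm cov}(s_1,s_{1+k}),
\end{eqnarray*}
with the boundary term involving the offset handled separately. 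Writing $c_k={\rm cov}(s_\ell,s_{\ell+k})$ and using the sum rule of Proposition~\ref{main-P}, which I expect to be of the form $c_0+2\sum_{k\ge1}c_k=0$ (spectral rigidity: the power spectrum of spacings vanishes at zero frequency), the linear-in-$L$ terms cancel. This leaves
\begin{eqnarray*}
{\rm var}_2[\lambda_L]=-2\sum_{k=1}^{L-1}k\,c_k-2L\sum_{k\ge L}c_k,
\end{eqnarray*}
so the problem reduces to the large-$k$ asymptotics of $c_k$ together with an exact evaluation of the constant in $\sum_k k c_k$.

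For the asymptotics, I would express $c_k$ through the power spectrum $S(\omega)$ of spacings from our earlier work, $c_k=\int_{-\pi}^{\pi}\frac{d\omega}{2\pi}S(\omega)e^{ik\omega}$. Then $\sum_{k}(L-|k|)_+c_k$ is the Fejér-kernel integral $\int\frac{d\omega}{2\pi}S(\omega)\,\frac{\sin^2(L\omega/2)}{\sin^2(\omega/2)}$. The small-$\omega$ behaviour of $S(\omega)$ controls the large-$L$ expansion. For $\beta=2$ one has $S(\omega)\sim \frac{|\omega|}{\pi^2}+\dots$, and this linear term produces the $\frac{1}{\pi^2}\log L$ growth. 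Subleading terms of the form $\omega^3\log\omega$ and $\omega^3$ produce the $L^{-2}\log L$ and $L^{-2}$ corrections. The small-$\omega$ expansion of $S(\omega)$ will come from the lemma of Section~\ref{technical-section}, i.e.\ the small-$\omega$ expansion of the counting-function generating integral. I would then compute the Fejér integral asymptotically by splitting into $|\omega|\lesssim 1/L$ and the rest, using Mellin-type asymptotics for the singular terms. The cosine oscillations average out to give the stated non-oscillatory $o(L^{-2})$ remainder for integer $L$.

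The main obstacle is fixing the exact constant $\frac{1}{\pi^2}(\gamma+1+\log 2\pi)-\frac16$, because it depends on global information about $S(\omega)$ and not only on its small-$\omega$ expansion. My plan here is to avoid computing it directly. Instead I would match against the number variance: since $\mathcal{N}(\lambda_L)=L$ exactly, the same Fejér integral with $S$ replaced by the density-fluctuation spectrum reproduces Eq.~(\ref{var-2}). The relation between the two spectra supplied by the earlier power-spectrum work then gives the constant offset $1/6$. This offset arises from the $\omega^2/12$-type difference between the two kernels ($\sum 1/k^2$ structure), which is the same ingredient that enters the sum-rule proof. Consistency with Theorem~\ref{main-T}, combined with the known expansion of Eq.~(\ref{var-2}) at large $L$, serves as a check on the $L^{-2}$ coefficients.
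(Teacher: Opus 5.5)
\textbf{There is a genuine gap: your argument does not determine the constant term.}

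Your reduction ${\rm var}_2[\lambda_L]=-2\sum_{k=1}^{L-1}k\,c_k-2L\sum_{k\ge L}c_k$ is exactly the paper's $L\sigma_0^{(2)}(L)-\sigma_1^{(2)}(L)$. The paper takes $\lambda_0$ to be a randomly chosen level, so there is no offset. However, the cancellation of the linear-in-$L$ term comes from Pandey's sum rule, Eq.~(\ref{Pandey}), not from Proposition~\ref{main-P}. Proposition~\ref{main-P} is the first-moment rule $C_2^{(1)}=\sum_{\ell\ge1}\ell\,(\delta I_\ell^{(2)}+\frac{1}{2\pi^2\ell^2})=\frac{1}{12}-\frac{1}{2\pi^2}\log(2\pi)$. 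That is precisely the ``exact evaluation of the constant in $\sum_k k c_k$'' you single out as the crux. With it, $-2\sum_{k=1}^{L-1}k\,c_k=\frac{1}{\pi^2}H_{L-1}-2C_2^{(1)}+2\sum_{k\ge L}k\,(c_k+\frac{1}{2\pi^2k^2})$. Inserting the large-$\ell$ expansion Eq.~(\ref{beyond-Dyson}) into the two tails and applying Euler--Maclaurin gives the whole statement, with the $L^{-1}$ terms cancelling; no Fej\'er-kernel analysis is needed.

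If you keep the Fej\'er route for the $L$-dependence, three corrections apply. First, $S(\omega)\simeq|\omega|/(2\pi)$ at $\beta=2$, not $|\omega|/\pi^2$; the latter would give $\frac{2}{\pi^3}\log L$. Second, reaching $o(L^{-2})$ needs derivative control on the small-$\omega$ remainder, not just its size. Third, Lemma~\ref{lemma-integral} gives the imaginary part; the expansion of $S$ is Eq.~(\ref{real-integral-eq}).

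The gap itself is the constant $\frac{1}{\pi^2}(\log 2\pi+\gamma+1)-\frac16$.

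\emph{Why your plan fails.} Matching against the number variance via $\mathcal{N}(\lambda_L)=L$ is the French--Mello--Pandey heuristic itself. In the paper, Theorem~\ref{main-T} is \emph{deduced} from this proposition, so invoking the $1/6$ duality, or consistency with it, is circular. Nor does the earlier power-spectrum work supply an exact relation between the spacing spectrum and the two-point density spectrum behind Eq.~(\ref{var-2}). By Eq.~(\ref{cos-F}), $S(\omega)$ is built from $\mathbb{E}[e^{i\omega n(\lambda)}]$, hence from correlation functions of all orders, so no kernel identity converts one spectrum into the other.

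\emph{The missing idea.} $C_2^{(1)}$ is the derivative at $\omega=0$ of the regularized \emph{sine} series $\sum_\ell(\delta I_\ell^{(2)}+\frac{1}{2\pi^2\ell^2})\sin(\omega\ell)$. By Lemma~\ref{Fourier-lemma} together with $\sum_\ell\sin(\omega\ell)/\ell^2={\rm Cl}_2(\omega)$, this series equals ${\rm Im}\int_0^\infty d\lambda\,\mathbb{E}[e^{i\omega n(\lambda)}]-\frac12\cot(\omega/2)+\frac{1}{2\pi^2}{\rm Cl}_2(\omega)$. Its small-$\omega$ behaviour then follows from Lemma~\ref{lemma-integral}, i.e.\ the cumulant expansion of the counting function. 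So the information you regard as global in $S(\omega)$ is actually local at $\omega=0$ for the conjugate series.

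Your $1/12$ intuition does appear there, through $-\frac12\cot(\omega/2)=-\frac1\omega+\frac{\omega}{12}+\dots$, but only combined with the $\log(2\pi)$ from Lemma~\ref{lemma-integral}. Without this step, or another independent evaluation of $C_2^{(1)}$, your argument fixes ${\rm var}_2[\lambda_L]$ only up to an undetermined additive constant.
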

\noindent
For a proof, see Section~\ref{Duality-section}.

\begin{conjectureB} \label{Conj-B}
Let $\lambda_L$ be the $L$-th ordered eigenvalue as defined in Proposition~\ref{main-L} and let
\begin{eqnarray} \label{v-beta}
    v_\beta = \left\{
    \begin{array}{cc}
      - \displaystyle\frac{\pi^2}{8}, & \beta=1;
\\
      \log 2 + \displaystyle\frac{\pi^2}{8}, & \beta=4.
    \end{array}
    \right.
\end{eqnarray}
Then, as $L \rightarrow\infty$, we put forward the conjecture:
\begin{eqnarray} \label{var-beta-lambda-L}\fl\qquad\qquad
    {\rm var}_\beta[\lambda_L] = \frac{2}{\beta \pi^2}\Big(\log (2\pi L) + v_\beta +\gamma+1
    \Big) -\frac{1}{6} + O\left(
        \frac{\log L}{L^2}
    \right).
\end{eqnarray}
\end{conjectureB}

\begin{proposition}[Sum rule, $\beta=2$] \label{main-P}
Let $\delta I_{\ell}^{(2)}$ be the auto-covariances of level spacings in the ${Sine}_2$ process as defined in Eq.~(\ref{dI-ell}). Then, the following sum rule holds:
\begin{eqnarray}\label{L-1-2}
    C^{(1)}_2=\sum_{\ell=1}^{\infty} \ell \left( \delta I_\ell^{(2)} + \frac{1}{2 \pi^2\ell^2}  \right)
      = \frac{1}{12} -\frac{1}{2 \pi^2}\log(2\pi).
\end{eqnarray}
\end{proposition}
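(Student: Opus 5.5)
The plan is to treat the sum rule as a statement about the small-$\omega$ behaviour of the power spectrum of level spacings. In the power-spectrum framework of Refs.~\cite{RTK-2023,TRK-2024}, the auto-covariances $\delta I_\ell^{(2)}={\rm cov}(s_k,s_{k+\ell})$ of the spacings $s_k=\lambda_{k+1}-\lambda_k$ of the ${\rm Sine}_2$ process are the Fourier coefficients of a spacing power spectrum $S(\omega)$:
\begin{eqnarray}
S(\omega)=\delta I_0^{(2)}+2\sum_{\ell\ge1}\delta I_\ell^{(2)}\cos(\ell\omega).
\end{eqnarray}
That earlier work also expresses $S(\omega)$ through a generating integral of the counting function, essentially a Fourier transform in $s$ of quantities built from $E_2(\ell;s)$. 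The key observation is that the weighted sum $\sum_\ell \ell\,(\delta I_\ell^{(2)}+1/(2\pi^2\ell^2))$ is read off from the first non-analytic, or linear-in-$|\omega|$, term of $S(\omega)$ near $\omega=0$, once the known $-1/(2\pi^2\ell^2)$ tail has been subtracted.

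\textbf{Step 1: tail subtraction.} Using $\sum_{\ell\ge1}\cos(\ell\omega)/\ell^2=\pi^2/6-\pi|\omega|/2+\omega^2/4$, I isolate the tail and write
\begin{eqnarray}
S(\omega)=S_{\rm reg}(\omega)-\frac{1}{\pi^2}\Big(\frac{\pi^2}{6}-\frac{\pi|\omega|}{2}+\frac{\omega^2}{4}\Big),\qquad S_{\rm reg}(\omega)=\delta I_0^{(2)}+2\sum_{\ell\ge1}a_\ell\cos(\ell\omega),
\end{eqnarray}
where $a_\ell=\delta I_\ell^{(2)}+1/(2\pi^2\ell^2)$ decays faster, like $O(\ell^{-3})$ or $O(\ell^{-4})$ up to logarithms.

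\textbf{Step 2: isolating $C^{(1)}_2$.} Since $\sum_\ell \ell a_\ell$ converges, but possibly not absolutely enough for the cosine series to have a clean Taylor expansion, I would use the sine-transform (Hilbert-conjugate) form instead. Equivalently, I consider the Abel-regularised generating function $G(z)=\sum_\ell a_\ell z^\ell$ and aim for
\begin{eqnarray}
C^{(1)}_2=\lim_{z\to1^-}G'(z).
\end{eqnarray}
The same $G$ is available from the covariance of the ordered positions: $\sum_\ell \ell\,\delta I_\ell^{(2)}$ weighted by $z^\ell$ relates to ${\rm var}[\lambda_L]$ through ${\rm var}[\lambda_L-\lambda_0]=\sum_{j,k<L}\delta I^{(2)}_{|j-k|}$. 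To fix the Abel sum, I would impose the sum rule $\delta I_0^{(2)}+2\sum_{\ell}\delta I^{(2)}_\ell=0$, i.e. $S(0)=0$, which follows from rigidity. This is the route that links the constant to ${\rm var}[\lambda_L]$.

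\textbf{Step 3: computing the constant.} I would evaluate the small-$\omega$ expansion of the counting-function generating integral (the lemma announced for Section~\ref{technical-section}). Concretely, I expand
\begin{eqnarray}
\int_0^\infty ds\, e^{i\omega s}\,\big(\cdots E_2(\ell;s)\cdots\big)
\end{eqnarray}
using the exact number variance (\ref{var-2}), because only the second moment of ${\mathcal N}(s)$ survives at the relevant order. The $\log(2\pi s)+\gamma$ asymptotics of ${\rm var}_2[{\mathcal N}(s)]$ produce terms of the form $|\omega|\log|\omega|$ and $|\omega|$. Matching the coefficient of $|\omega|$ against $\frac{1}{2\pi}|\omega|$ from the tail, plus the contribution of $C_2^{(1)}$ via the conjugate series, should yield $\frac1{12}-\frac{1}{2\pi^2}\log(2\pi)$. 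Here the $1/12$ would come from $\pi^2/6\cdot\frac{1}{2\pi^2}$-type constants, and the $\log 2\pi$ from the logarithm in (\ref{var-2}).

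\textbf{Main obstacle.} I expect the hardest part to be Step 3: controlling the small-$\omega$ expansion rigorously, including the interchange of the $\ell$-sum with the $s$-integral, and correctly separating the analytic part from the $|\omega|\log|\omega|$ singular part. My first attempt would be to write the generating integral in terms of the Fredholm determinant $\det(1-(1-e^{i\omega})K_{\rm sine})$ on $[0,s]$. I would then use its known large-$s$ (Basor–Widom / Fisher–Hartwig) asymptotics, which are uniform for small $\omega$, to extract the $O(\omega)$ coefficient exactly.
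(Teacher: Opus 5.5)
There is a genuine gap: you never supply an exact bridge between $C_2^{(1)}$ and counting statistics, and the substitutes you propose do not work. Your Step~2 correctly pivots to $G(z)=\sum_\ell a_\ell z^\ell$ with $C_2^{(1)}=\lim_{z\to1^-}G'(z)$. You then propose to obtain $G$ from ${\rm var}[\lambda_L]$, which is circular. By Eq.~(\ref{sum-M-1}), $\sum_{\ell<M}\ell a_\ell=\frac{M-1}{2}{\rm var}_2[\lambda_M]-\frac{M}{2}{\rm var}_2[\lambda_{M-1}]+\frac{1}{2\pi^2}H_{M-1}$. Knowing $C_2^{(1)}$ is therefore equivalent to knowing the $O(1)$ term of ${\rm var}_2[\lambda_L]$, an ordered statistic whose asymptotics (Proposition~\ref{main-L}) the paper derives \emph{from} the sum rule. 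Pandey's rule does not help either: it fixes $G(1)$, not $G'(1)$, and it plays no role in the proof. The missing ingredient is the identity $(1+\delta_{\ell,0})\int_0^\infty E_2(\ell;\lambda)\,d\lambda=1+\delta I_\ell^{(2)}$. Summed against $z^\ell$, it gives $\frac12\delta I_0^{(2)}+\sum_{\ell\ge1}z^\ell\delta I_\ell^{(2)}=\int_0^\infty d\lambda\,\mathbb{E}[z^{n(\lambda)}]-\frac{1+z}{2(1-z)}$. On $|z|=1$, its imaginary part is the sine series $\sum_{\ell\ge1}\delta I_\ell^{(2)}\sin(\omega\ell)={\rm Im}\int_0^\infty d\lambda\,\mathbb{E}[e^{i\omega n(\lambda)}]-\frac12\cot(\omega/2)$. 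Note also that $\omega$ is conjugate to the count $n(\lambda)$, not to the interval length. Your ``concretely'' integral $\int_0^\infty ds\,e^{i\omega s}(\cdots)$ is therefore not the right object, whereas your later Fredholm determinant $\det(1-(1-e^{i\omega})K_{\rm sine})$ is.

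Second, you look for $C_2^{(1)}$ in the wrong (even) part. The $|\omega|/(2\pi)$ term of $S(\omega)$ is produced entirely by the Dyson tail in your Step~1. Meanwhile $S_{\rm reg}$ is even and differentiable, so $S_{\rm reg}'(0)=0$. Matching $|\omega|$ coefficients is thus the tautology $\frac{1}{2\pi}=\frac{1}{2\pi}$. The conjugate series of $S_{\rm reg}$ is a nonlocal functional of $S$, so no small-$\omega$ expansion of the cosine side can deliver it. In the correct computation everything is odd in $\omega$:
\begin{itemize}
\item adding $\frac{1}{2\pi^2}{\rm Cl}_2(\omega)=\frac{1}{2\pi^2}(\omega-\omega\log\omega)+O(\omega^3)$ regularizes the sine series;
\item the cotangent contributes $-\frac12\cot(\omega/2)=-\frac1\omega+\frac{\omega}{12}+O(\omega^3)$;
\item Lemma~\ref{lemma-integral} gives ${\rm Im}\int_0^\infty d\lambda\,\mathbb{E}[e^{i\omega n(\lambda)}]=\frac1\omega+\frac{\omega}{2\pi^2}\log\frac{\omega}{2\pi}-\frac{\omega}{2\pi^2}+o(\omega)$, where the $\gamma$ from $J_0$ cancels against $\tilde\kappa_2(\infty)=(1+\gamma)/\pi^2$.
\end{itemize}
The $1/\omega$, $\omega\log\omega$ and $\omega/(2\pi^2)$ terms cancel, leaving slope $\frac{1}{12}-\frac{1}{2\pi^2}\log(2\pi)$. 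So the $1/12$ comes from the cotangent, i.e.\ from the ``$+1$'' in $\mathbb{E}[s_ks_{k+\ell}]=1+\delta I_\ell^{(2)}$, not from $\zeta(2)/(2\pi^2)$ in the cosine tail. Your Basor--Widom idea is a legitimate alternative to the paper's cumulant expansion for that lemma, since the Barnes-function constant reproduces $\tilde\kappa_2(\infty)$. It still requires control of the small-$\lambda$ region and uniformity of the error term in $\omega$. More importantly, it only becomes useful once plugged into the identity above.
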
\noindent
For a proof, see Section~\ref{AC-section}.

\begin{conjectureC}[Sum rule, $\beta=1$ and $4$] \label{Conj-C}
Let $\delta I_{\ell}^{(\beta)}$ be the auto-covariances of level spacings as defined in Eq.~(\ref{dI-ell}). Then, we conjecture the following sum rule:
\begin{eqnarray}\label{L-1-beta}
    C_\beta^{(1)}=\sum_{\ell=1}^{\infty} \ell \left( \delta I_\ell^{(\beta)} + \frac{1}{\beta \pi^2\ell^2}  \right)
      = \frac{1}{12} -\frac{1}{\beta \pi^2} \big( v_\beta + \log(2\pi)\big),
\end{eqnarray}
where $v_\beta$ is defined as in Eq.~(\ref{v-beta}).
\end{conjectureC}

The dichotomy in our presentation -- providing rigorous proofs for the unitary class ($\beta=2$) while leaving the orthogonal ($\beta=1$) and symplectic ($\beta=4$) classes as conjectures -- stems from the technical requirements of our analytical framework. Specifically, the proofs rely on precise control over the asymptotic behavior of the counting function cumulants. While these asymptotics are well understood for ${\beta=2}$~\cite{RK-2023}, see also Ref.~\cite{IAC-2013}, they remain significantly less studied for $\beta=1$ and $4$. Thus, we leave formal proofs of Conjectures A, B, and C for a separate study.

\section{Numerical analysis and discussion}\label{D-section}

\begin{figure}[t]
    \centering
    \includegraphics[width=0.85\textwidth]{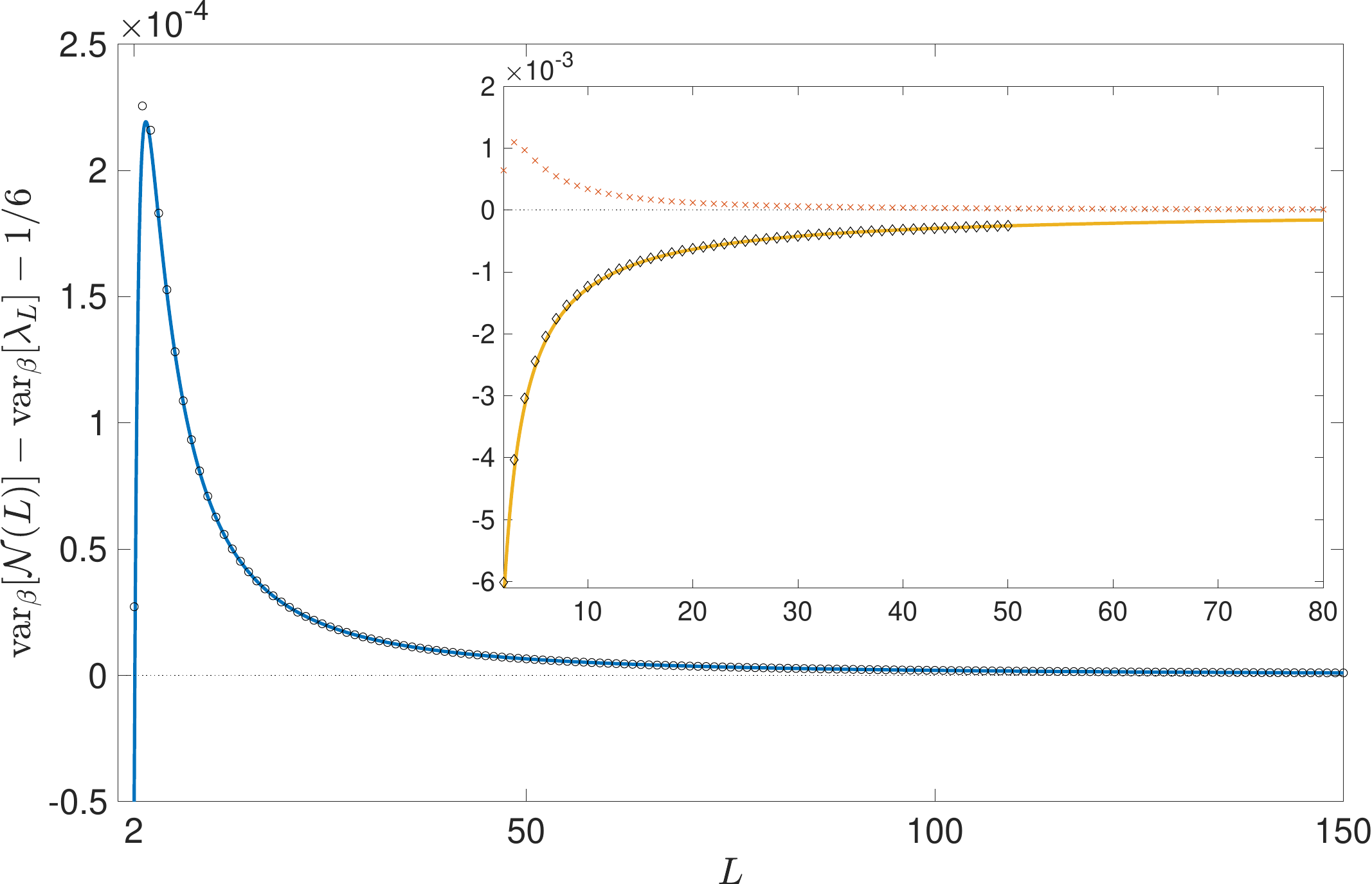}
    \caption{
    The deviation of the variance difference $\mathrm{var}_\beta[\mathcal{N}(L)] - \mathrm{var}_\beta[\lambda_L]$ from the universal $1/6$ limit as a function of $L$. The main panel showcases the $\beta=2$ convergence: the blue curve represents the theoretical correction in Eq.~(\ref{main-T-eq}) against numerical data (circles) generated via Bornemann's toolbox~\cite{B-2010}. Inset: numerical results for $\beta=1$ (crosses) and $\beta=4$ (diamonds), with the yellow curve marking the $1/L$ correction [Eq.~(\ref{Conj-1-eq}), Conjecture~A] for the latter class.
}
    \label{one-six}
\end{figure}

Before providing the proofs in the subsequent sections, we turn to a detailed numerical validation of our theoretical predictions. This is necessitated by the ambiguities in earlier numerical investigations of the $1/6$ formula, as discussed in the Introduction. By employing high-precision numerical computations, we verify the universal asymptotic limit, the predicted symmetry-dependent convergence laws, and the previously unknown sum rule across the three Dyson symmetry classes.

\noindent\newline\newline
{\it Theorem~\ref{main-T} and Conjecture~A: Figure~\ref{one-six}.}---To validate our analytical framework, Figure~\ref{one-six} plots the numerically evaluated deviation of $\mathrm{var}_\beta[\mathcal{N}(L)] - \mathrm{var}_\beta[\lambda_L]$ from the $1/6$ plateau. The results confirm that while the asymptotic duality is universal, the approach to the limit is diverse and varies significantly between classes.

\begin{table}[b]
\centering
\small
\renewcommand{\arraystretch}{1.2}
\caption{Numerical evaluation of the variance difference $\Delta_\beta(L) = \mathrm{var}_\beta[\mathcal{N}(L)] - \mathrm{var}_\beta[\lambda_L]$ for selected values of $L$ for the $\beta=1, 2$ and $4$ symmetry classes. For $\beta=4$, we also focus on the shifted variance difference $\Delta_4^* = \Delta_4 + (8\pi^2 L)^{-1}$, demonstrating how the leading $O(L^{-1})$ correction term proposed in Conjecture~A accounts for the primary deviation from the $1/6$ asymptotic limit.}
\label{one-six-table}

\begin{tabular}{c S[table-format=1.8] S[table-format=1.2e-1] S[table-format=1.8] S[table-format=1.2e-1]}
\toprule
& \multicolumn{2}{c}{$\beta = 1$} & \multicolumn{2}{c}{$\beta = 2$} \\
\cmidrule(lr){2-3} \cmidrule(lr){4-5}
$L$ & {$\Delta_1$} & {$\Delta_1 -1/6$} & {$\Delta_2$} & {$\Delta_2 - 1/6$} \\
\midrule
2   & 0.16731036 & 6.44e-4 & 0.16669386 & 2.72e-5 \\
5   & 0.16746704 & 8.00e-4 & 0.16684974 & 1.83e-4 \\
10  & 0.16700836 & 3.42e-4 & 0.16674765 & 8.10e-5 \\
20  & 0.16678762 & 1.21e-4 & 0.16669577 & 2.91e-5 \\
30  & 0.16672966 & 6.30e-5 & 0.16668191 & 1.52e-5 \\
40  & 0.16670579 & 3.91e-5 & 0.16667616 & 9.50e-6 \\
50  & 0.16669354 & 2.69e-5 & 0.16667320 & 6.53e-6 \\
60  & 0.16668636 & 1.97e-5 & 0.16667146 & 4.80e-6 \\
70  & 0.16668178 & 1.51e-5 & 0.16667035 & 3.69e-6 \\
80  & 0.16667867 & 1.20e-5 & 0.16666960 & 2.93e-6 \\
90  & 0.16667645 & 9.78e-6 & 0.16666906 & 2.39e-6 \\
100 & 0.16667481 & 8.14e-6 & 0.16666866 & 1.99e-6 \\
\bottomrule
\end{tabular}

\vspace{2em}

\begin{tabular}{c S[table-format=1.8] S[table-format=-1.2e-1] S[table-format=1.8] S[table-format=1.2e-1]}
\toprule
& \multicolumn{4}{c}{$\beta = 4$ } \\
\cmidrule(lr){2-5}
$L$ & {$\Delta_4$} & {$\Delta_4 - 1/6$} & {$\Delta_4^{*}$} & {$\Delta_4^{*} - 1/6$} \\
\midrule
2   & 0.16065153 & -6.02e-3 & 0.16698410 & 3.17e-4 \\
5   & 0.16422413 & -2.44e-3 & 0.16675716 & 9.05e-5 \\
10  & 0.16543103 & -1.24e-3 & 0.16669754 & 3.09e-5 \\
20  & 0.16604327 & -6.23e-4 & 0.16667653 & 9.86e-6 \\
30  & 0.16624944 & -4.17e-4 & 0.16667161 & 4.95e-6 \\
40  & 0.16635305 & -3.14e-4 & 0.16666968 & 3.01e-6 \\
50  & 0.16641540 & -2.51e-4 & 0.16666870 & 2.04e-6 \\
\bottomrule
\end{tabular}
\end{table}

\begin{figure}[t]
    \centering
    \includegraphics[width=0.85\textwidth]{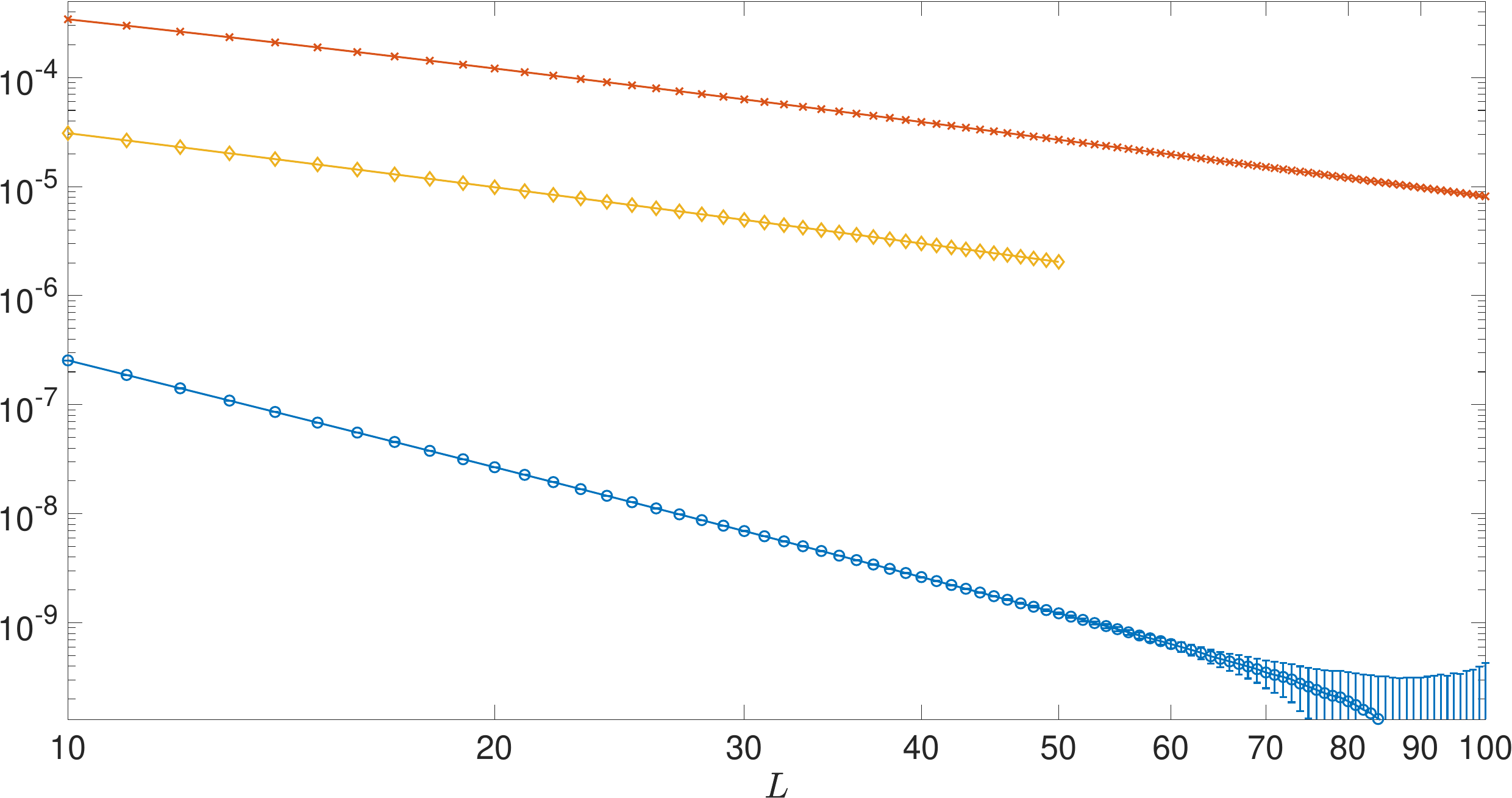}
    \caption{
    Residual difference obtained after subtracting from the numerically computed difference $\mathrm{var}_\beta[\mathcal{N}(L)]-\mathrm{var}_\beta[\lambda_L]$
    the asymptotic expressions stated in Eq.~(\ref{main-T-eq}) for $\beta=2$ and in Eq.~(\ref{Conj-1-eq}) for $\beta=1, 4$. The red crosses, blue circles, and yellow
    diamonds correspond to $\beta=1, 2, 4$, respectively. In the log-log representation, the residuals for $\beta=1$ and $\beta=4$ exhibit an approximate slope $-2$,
    whereas the residual for $\beta=2$ exhibits an approximate slope $-4$ until numerical saturation sets in for $L\gtrsim 60$. Error bars indicate the estimated numerical
    uncertainty from Bornemann's toolbox.
}
    \label{one-six-errors}
\end{figure}

\begin{itemize}
\item The $\beta=2$ profile displayed in the main panel reveals an exceptional level of agreement between the numerical circles and the theoretical blue curve. The expansion derived in Eq.~(\ref{main-T-eq}) remains robust even for small $L$, proving that the approach to the limit is governed by subtle corrections with the decay rates $O(L^{-2}\log L)$ and $O(L^{-2})$.
\item As shown in the inset, the $\beta=4$ and $\beta=1$ cases exhibit different behaviors. For $\beta=4$ (diamonds), the approach to the plateau is notably slower and is well captured by the $O(L^{-1})$ leading correction given in Conjecture~A. This term represents the dominant deviation for the symplectic class. In contrast, the $\beta=1$ case (crosses) shows a faster convergence, consistent with the $O(L^{-2}\log L)$ bound in Eq.~(\ref{Conj-1-eq}). The absence of a $1/L$ term for $\beta=1$ suggests a structural similarity to the $\beta=2$ case.
\end{itemize}
\noindent\newline
{\it Theorem~\ref{main-T} and Conjecture~A: Table~\ref{one-six-table}.}---These observations are quantified in Table~\ref{one-six-table}, which provides a detailed evaluation of the spectral variance differences $\Delta_\beta(L) = \mathrm{var}_\beta[\mathcal{N}(L)] - \mathrm{var}_\beta[\lambda_L]$ across the three Dyson symmetry classes.
\begin{itemize}
\item For the $\beta=2$ case, numerical data substantiate the analytical expansion in Theorem~\ref{main-T} with high precision. Although discrepancies $\Delta_2-1/6$ for $L < 10$ lie in the $10^{-4}$--$10^{-5}$ range, the residual narrows to $6.53 \times 10^{-6}$ at $L=50$ and reaches $1.99 \times 10^{-6}$ by $L=100$. This agreement confirms that the expansion in Eq.~(\ref{main-T-eq}) describes eigenvalue fluctuations fairly well even before the asymptotic regime is reached.
\item The $\beta=1$ convergence profile follows a trajectory similar to the unitary case, supporting the absence of a slow $1/L$ leading-order term in Eq.~(\ref{Conj-1-eq}). Despite initial deviations $\Delta_1-1/6$ for small $L$ that stay within the $10^{-3}$--$10^{-4}$ range, the accuracy improves steadily as the system approaches the plateau. Specifically, the error reduces to $2.69 \times 10^{-5}$ at $L=50$ and falls further to $8.14 \times 10^{-6}$ by $L=100$. Such close alignment provides strong empirical evidence for the $O(L^{-2} \log L)$ proposed in Conjecture~A.
\item In the $\beta=4$ case, the raw variance difference $\Delta_4$ exhibits the slowest relaxation, yet the $1/L$ correction proposed in Conjecture~A effectively accounts for this discrepancy. After including this term, the residual deviation $\Delta_4^* -1/6$ for $L < 10$ is already reduced to the $10^{-4}$--$10^{-5}$ range, and by $L=50$, it is reduced to $2.04 \times 10^{-6}$. This order-of-magnitude improvement provides strong numerical support for the symmetry-dependent corrections and the special standing of the symplectic class as formulated in Conjecture~A.
\end{itemize}
\begin{figure}[t]
    \centering
    \includegraphics[width=0.85\textwidth]{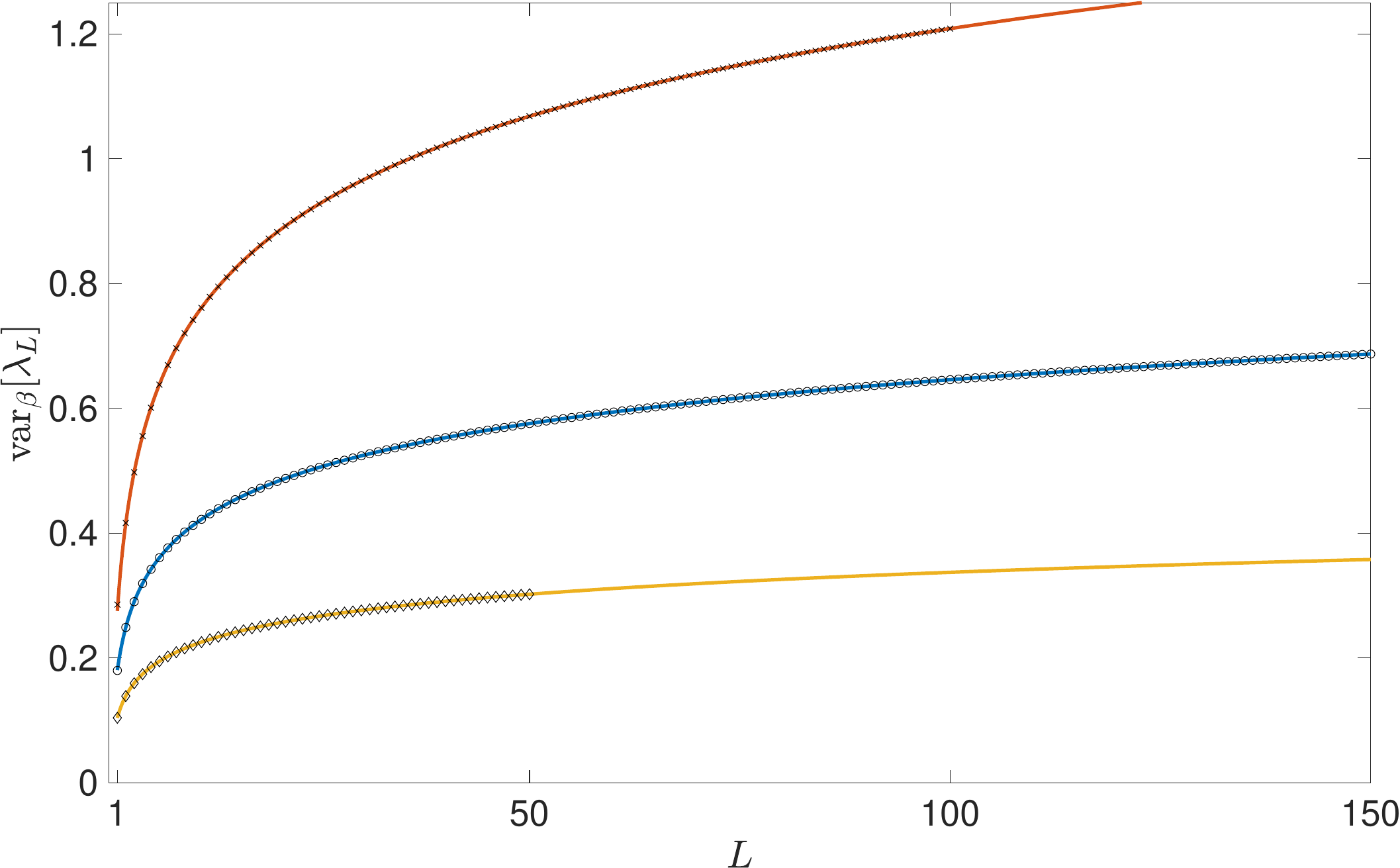}
    \caption{
    The variance $\text{var}_{\beta}[\lambda_L]$ of the $L$-th ordered eigenvalue as a function of $L$. The theoretical curve for $\beta=2$ (blue line) corresponds to Eq.~(\ref{var-2-lambda-L}) of Proposition~\ref{main-L}, while those for $\beta=1$ (red line) and $\beta=4$ (yellow line) symmetry classes reflect Eq.~(\ref{var-beta-lambda-L}) of Conjecture~B. Numerical data obtained with the help of Bornemann's toolbox~\cite{B-2010} are denoted by circles ($\beta=2$), crosses ($\beta=1$), and diamonds ($\beta=4$).
}
    \label{ordered-eig}
\end{figure}
\noindent
{\it Theorem~\ref{main-T} and Conjecture~A: Figure~\ref{one-six-errors}.}---Whereas Figure~\ref{one-six} and Table~\ref{one-six-table} establish the convergence to the
universal $1/6$ limit, Figure~\ref{one-six-errors} directly tests the quality of the asymptotic laws themselves. After subtraction of the displayed terms in
Eq.~(\ref{main-T-eq}) and Eq.~(\ref{Conj-1-eq}), the residuals for $\beta=1$ and $\beta=4$ follow approximately straight lines of slope $-2$, consistent with the
stated $O(L^{-2}\log L)$ control. In the unitary class, the residual has slope about $-4$, in agreement with the sharper expansion in Theorem~\ref{main-T}, until the
curve eventually reaches the numerical precision floor of Bornemann's toolbox near $L\approx 60$.

\noindent\newline\newline
{\it Proposition~\ref{main-L} and Conjecture~B: Figure~\ref{ordered-eig}.}---Figure~\ref{ordered-eig} illustrates the variance $\mathrm{var}_{\beta}[\lambda_L]$ across three symmetry classes. The red ($\beta=1$), blue ($\beta=2$), and yellow ($\beta=4$) curves are differentiated by symmetry-dependent offsets $v_\beta$ defined in Eq.~(\ref{v-beta}). The blue curve plots the expansion in Eq.~(\ref{var-2-lambda-L}) from Proposition~\ref{main-L}, while the red and yellow lines represent the predictions of Conjecture~B. All three curves show a remarkable fit to high-precision numerical data obtained with Bornemann's toolbox -- indicated by symbols -- offering high-precision numerical corroboration of the analytical results for $\beta=2$ and strong numerical support for the conjectural extensions. Strikingly, although these expansions were derived in the asymptotic $L \rightarrow \infty$ limit, they remain robust and fit the numerical data even at $L=1$, describing eigenvalue fluctuations well before the true asymptotic regime is reached.

\noindent\newline\newline
{\it The sum rule: Proposition~\ref{main-P} and Conjecture~C.}---The theoretical values of the sum rule constants $C_\beta^{(1)}$ -- proven in Proposition~\ref{main-P} for the unitary class ($\beta=2$) and predicted by Conjecture~C for the orthogonal ($\beta=1$) and symplectic ($\beta=4$) classes -- are defined as:
\begin{equation} \label{c-beta-summary}
    C_\beta^{(1)} =
    \begin{cases}
      \displaystyle\frac{5}{24}- \frac{1}{\pi^2} \log(2\pi), & \beta=1;
\\[2ex]
      \displaystyle\frac{1}{12} - \frac{1}{2\pi^2} \log(2\pi), & \beta=2;
\\[2ex]
      \displaystyle\frac{5}{96} - \frac{1}{4\pi^2} \log(4\pi), & \beta=4.
\end{cases}
\end{equation}
To compare these constants against numerical values, we first transform their {\it series} definition Eqs.~(\ref{L-1-2}) and (\ref{L-1-beta}) into equivalent form that is more suitable for numerical analysis. To this end, we split the series representation of $C_\beta^{(1)}$ into two parts,
\begin{eqnarray}\label{C-beta-split}
    C_\beta^{(1)} = \sum_{\ell=1}^{M-1} \ell \left(\delta I_\ell^{(\beta)}+\frac{1}{\beta\pi^2\ell^2}\right) + \sum_{\ell=M}^{\infty} \ell \left(\delta I_\ell^{(\beta)}+\frac{1}{\beta\pi^2\ell^2}\right),
\end{eqnarray}
and make use of the identity~\cite{RTK-2023}~\footnote{For $\ell=1$, one formally sets $\lambda_0=0$.}
\begin{eqnarray} \label{sum-M-1}
    \delta I_\ell^{(\beta)} = \frac{1}{2} \left(
        {\rm var}_\beta [\lambda_{\ell+1}] - 2 {\rm var}_\beta [\lambda_{\ell}] + {\rm var}_\beta [\lambda_{\ell-1}]
    \right).
\end{eqnarray}
By virtue of Eq.~(\ref{sum-M-1}), the first sum in Eq.~(\ref{C-beta-split}) reduces to
\begin{eqnarray} \label{sum-M-2} \fl\qquad
    \sum_{\ell=1}^{M-1} \ell \left(\delta I_\ell^{(\beta)}+\frac{1}{\beta\pi^2\ell^2}\right)
    = \frac{M-1}{2} {\rm var}_\beta [\lambda_{M}] - \frac{M}{2} {\rm var}_\beta [\lambda_{M-1}] + \frac{1}{\beta\pi^2} H_{M-1},
\end{eqnarray}
where $H_k$ is the $k$-th harmonic number thus yielding a numerically friendly variant of Eq.~(\ref{C-beta-split}):
\begin{eqnarray}\label{C-beta-split-1}\fl\qquad\qquad
    C_\beta^{(1)} = \frac{M-1}{2} {\rm var}_\beta [\lambda_{M}] - \frac{M}{2} {\rm var}_\beta [\lambda_{M-1}] + \frac{1}{\beta\pi^2} H_{M-1} \nonumber\\
\fl\qquad\qquad\qquad\qquad\qquad\qquad\qquad + \sum_{\ell=M}^{\infty} \ell \left(\delta I_\ell^{(\beta)}+\frac{1}{\beta\pi^2\ell^2}\right).
\end{eqnarray}
The variances in Eq.~(\ref{C-beta-split-1}) can be produced, with high precision, by Bornemann's toolbox~\cite{B-2010}. In addition, assuming that $M$ is large enough, the remaining series can be carefully estimated from Eq.~(\ref{beyond-Dyson}) for $\beta=2$.

\begin{table}[t]
\centering
\small
\renewcommand{\arraystretch}{1.3}
\caption{Numerical verification of the sum rule constants $C_{\beta}^{(1)}$ across the Dyson symmetry classes. Theoretical values are given by Eq.~(\ref{c-beta-summary}).}
\label{sum-rule-table}
\begin{tabular}{l S[table-format=-1.9] S[table-format=-1.9] S[table-format=-1.9]}
\toprule
& {$\beta=1$} & {$\beta=2$} & {$\beta=4$} \\
\midrule
$C_{\beta}^{(1)}$ (theory)   & 0.022117453 & -0.009774606 & -0.012028269 \\
$C_{\beta}^{(1)}$ (numerics) & 0.022130516    & -0.009774605     & -0.012023026    \\
Error                        & \num{1.31e-05} & \num{1.33e-09}   & \num{5.23e-06}  \\
$M$                          & {100}          & {100}            & {50}            \\
\bottomrule
\end{tabular}
\end{table}

\begin{figure}[t]
    \centering
    \includegraphics[width=0.85\textwidth]{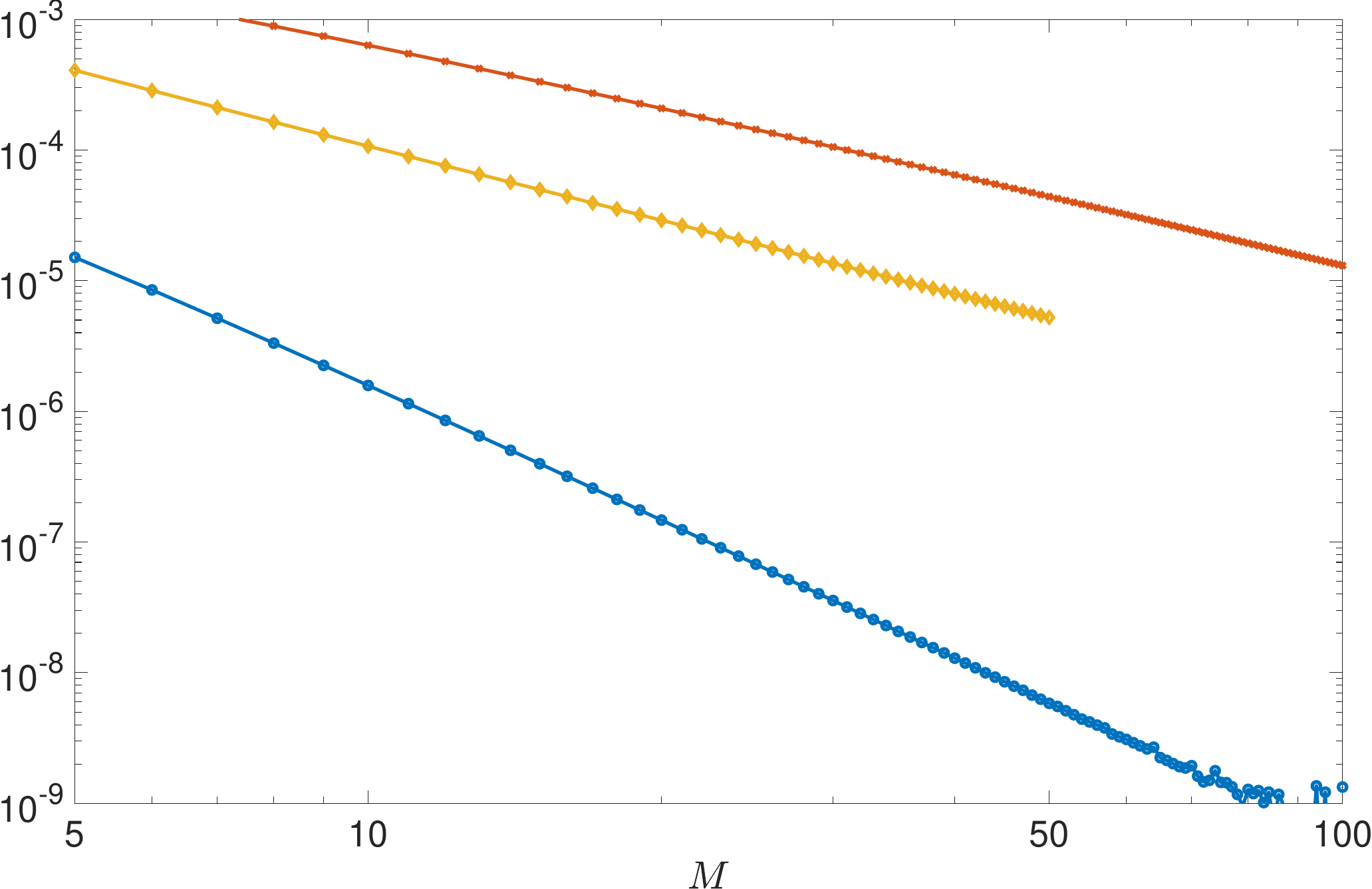}
    \caption{Difference of numerically evaluated values of $C_\beta^{(1)}$, Eq.~(\ref{C-beta-split-1}), and their theoretical values, Eq.~(\ref{c-beta-summary}), for
     $\beta=1$ (red crosses), $\beta=2$ (blue circles), and $\beta=4$ (yellow diamonds) as a function of the splitting parameter $M$.
}
    \label{c-beta-error}
\end{figure}

The data presented in Table~\ref{sum-rule-table} indicate excellent agreement between the analytical predictions and the numerical results obtained via Bornemann’s toolbox. The agreement in the unitary class ($\beta=2$) is particularly striking, with a residual error on the order of $10^{-9}$; this clearly confirms the analytical derivation presented in Proposition~\ref{main-P}. A high level of consistency is also observed for the orthogonal ($\beta=1$) and symplectic ($\beta=4$) classes, where the numerical results yield errors of approximately $10^{-5}$ and $10^{-6}$, respectively. While these values are slightly less precise than the unitary case -- due to the disregarded tail contribution in Eq.~(\ref{C-beta-split-1}) -- the results remain well within expected bounds and provide strong numerical evidence for Conjecture~C.

The convergence dynamics of the sum rule constants $C_{\beta}^{(1)}$ are illustrated in Figure~\ref{c-beta-error} as a function of the splitting parameter $M$. For the unitary class ($\beta=2$), the difference between the numerical and theoretical values follows an approximately straight line with a slope of about $-4$, reaching a residual error of $10^{-9}$ near $M=80$. This high level of precision is facilitated by the use of an asymptotic approximation for the infinite sum in Eq.~(\ref{C-beta-split-1}); as $M$ approaches 100, the error plateau suggests that the limit of numerical accuracy for evaluating the variances in Eq.~(\ref{C-beta-split-1}) has been reached. In contrast, the orthogonal ($\beta=1$) and symplectic ($\beta=4$) classes exhibit a shallower convergence slope of approximately $-2$. For these symmetries, the accuracy reaches a notable level of $10^{-5}$ at $M=100$ for $\beta=1$ and $\num{5e-06}$ at $M=50$ for $\beta=4$, despite omitting the tail contributions in Eq.~(\ref{C-beta-split-1}).

\noindent\newline\newline
{\it Summary.}---The numerical results presented here independently verify the main findings announced in Section 2 and resolve the ambiguities surrounding the validity of the $1/6$ formula. With these results corroborating both the universal asymptotic limit and the symmetry-dependent convergence rates across all three Dyson classes, we now proceed to the formal analytical derivations in the subsequent sections.

\section{Proof of main results}\label{proofs-section}

We construct the proofs from the ground up: Section~\ref{AC-section} establishes the sum rule (Proposition~\ref{main-P}), which is then utilized in Section~\ref{Duality-section} to derive the asymptotic variance of the ordered eigenvalue (Proposition~\ref{main-L}) and ultimately prove Theorem~\ref{main-T}. To streamline the exposition, the technical small-frequency expansions required for the sum rule are deferred to Section~\ref{technical-section}.

\subsection{Auto-covariances of level spacings and the sum rule}\label{AC-section}

{\it Definition and Dyson's conjecture.}---In what follows, we focus on unfolded bulk spectra of infinite-dimensional random matrices described by the ${\rm Sine}_\beta$ point process~\cite{KS-2009,VV-2017,MNN-2019}. To avoid introducing a biased spacing~\cite{TRK-2024}, we choose a level at random from this process, denote it as $\lambda_0$, and consider the infinite sequence of ordered energy levels $\{\dots, \lambda_{-1}, \lambda_0, \lambda_1,\dots\}$ such that $\lambda_{k}<\lambda_{k+1}$ for all $k$. Let $\{ \dots, s_{-1}, s_0, s_1,\dots\}$ be the sequence of associated level spacings $s_k = \lambda_{k} - \lambda_{k-1}$, and let
\begin{eqnarray}\label{dI-ell}
    \delta I_\ell^{(\beta)} = {\rm cov}(s_k,s_{k+\ell}) = {\mathbb E} \big[ s_k s_{k+\ell}\big] -1
\end{eqnarray}
denote their covariance matrix. The translational invariance of the ${\rm Sine}_\beta$ process and the random-level sampling ensure that the discrete sequence of spacings $\{s_k\}$ is strictly stationary. Consequently, the covariance matrix is independent of the index $k$ of the reference spacing. The correlations between the spacings weaken as the distance between eigenlevels grows. The rate of correlation decay was conjectured by Dyson (unpublished) and quoted in Odlyzko's landmark numerical studies of the Riemann zeta function zeros~\cite{O-1987}:
\begin{eqnarray} \label{D-conjecture}
    \delta I_{|\ell|}^{(\beta)} \simeq \delta I_{|\ell|}^{(\beta, \rm D)} = - \frac{1}{\beta \pi^2 \ell^2}.
\end{eqnarray}
This power-law decay of distant spacings, ultimately related to the logarithmic rigidity of random-matrix-theory spectra, is assumed to hold asymptotically for sufficiently large $\ell$. For a heuristic argument in favor of Dyson's conjecture, the reader is referred to Ref.~\cite{BFFMPW-1981}.

\noindent\newline\newline
{\it Beyond Dyson.}---A nonperturbative approach to studying the auto-covariances of level spacings was formulated by the present authors, exploiting the fact that the power spectrum of level spacings is the discrete-time Fourier series of the auto-covariances~\cite{RTK-2023}:
\begin{eqnarray}\label{PS-A-FT}
    S^{\rm sp}_{\beta, \infty} (\omega)  = \sum_{\ell \in {\mathbb Z}} \delta I_{|\ell|}^{(\beta)} e^{i \omega \ell},\;
\omega \in [0,\pi].
\end{eqnarray}
Since the power spectrum $S^{\rm sp}_\infty (\omega)$ is known exactly in terms of Painlev\'e transcendents (fifth Painlev\'e~\cite{RK-2023} for $\beta=2$ and third Painlev\'e~\cite{FW-2025} for $\beta=1$ and $4$), the inverse of Eq.~(\ref{PS-A-FT}) yields an exact formula for the auto-covariances
\begin{eqnarray}\label{ac-exact}
    \delta I_{|\ell|}^{(\beta)} = \frac{1}{\pi} \int_{0}^{\pi} d\omega\, S^{\rm sp}_{\beta, \infty} (\omega) \cos(\ell \omega).
\end{eqnarray}
This relation paves the way for a systematic study of the asymptotic behavior of the auto-covariances as $\ell \rightarrow\infty$ which is governed by the small-$\omega$ expansion of the power spectrum. For $\beta=2$, it holds~\cite{RTK-2023}
\begin{eqnarray} \label{beyond-Dyson}
 \delta I_{|\ell|}^{(2)} = -\frac{1}{2\pi^2\ell^2} - \frac{3}{2 \pi^4 \ell^4} \left(
    \log (2\pi \ell) + \gamma - \frac{11}{6}
 \right) + o(\ell^{-4}).
\end{eqnarray}
A high-precision numerical verification of this result can be found in Ref.~\cite{RTK-2023}. For heuristic treatment of $\beta=1$ and $4$ symmetry classes, the reader is referred to Ref.~\cite{FW-2025}.

\noindent\newline\newline
{\it Sum rules.}---Due to spectral rigidity, the auto-covariances $\delta I_{\ell}^{(\beta)}$ are tightly constrained: any local spacing fluctuation is compensated by anti-correlated fluctuations in the surrounding spacings. Quantitatively, this is expressed by Pandey's sum rule~\cite{P-1986}:
\begin{eqnarray} \label{Pandey}
   C^{(0)}_\beta = \sum_{\ell \in {\mathds Z}} \delta I_{|\ell|}^{(\beta)} = 0.
\end{eqnarray}
In terms of the power spectrum of level spacings [Eq.~(\ref{PS-A-FT})], this sum rule corresponds to the vanishing condition $S_{\beta, \infty}^{\rm sp}(0)=0$. This is indeed satisfied, as seen from the small-frequency expansion~\cite{ROK-2017,ROK-2020,RK-2023,FW-2025}
\begin{eqnarray} \label{small-omega}
    S_{\beta, \infty}^{\rm sp}(\omega)=\frac{|\omega|}{\beta \pi} + o(\omega).
\end{eqnarray}

Interestingly, Pandey's sum rule is not the only constraint on the auto-covariances of level spacings. In particular, the previously unknown sum rule formulated in Proposition~\ref{main-P} plays a central r\^ole in the proof of Theorem~\ref{main-T} and, consequently, in establishing the asymptotic duality of spectral variances, Eq.~(\ref{TRK-asymptotic}). The proof of the sum rule in Eq.~(\ref{L-1-2}) relies on the following lemma. Although our primary focus is confined to $\beta=2$, we maintain a general presentation, in particular accommodating $\beta=1$ and $4$ unless stated otherwise.

\begin{lemma} \label{Fourier-lemma}
  Let $\delta I_\ell^{(\beta)}$ be the covariance matrix [Eq.~(\ref{dI-ell})] of level spacings in the ${\it Sine}_\beta$ point process with unit local density, $\rho=1$, and let $n(\lambda)$ be its point counting function (we have omitted the subscript $\beta$ for brevity). Then, for $0<\omega<2\pi$, the sine and cosine Fourier series of the covariance matrix are given by
  \begin{eqnarray} \label{cos-F}
    \frac{1}{2}\delta I_0^{(\beta)} + \sum_{\ell=1}^{\infty} \delta I_\ell^{(\beta)} \cos (\omega \ell) = {\rm Re\,} \int_{0}^{\infty} d\lambda \, \mathbb{E} \big[ e^{i\omega n(\lambda)} \big],
  \end{eqnarray}
  and
  \begin{eqnarray} \label{sin-F}
    \sum_{\ell=1}^{\infty} \delta I_\ell^{(\beta)} \sin (\omega \ell) = {\rm Im\,} \int_{0}^{\infty} d\lambda \, \mathbb{E} \big[ e^{i\omega n(\lambda)} \big] -\frac{1}{2} \cot\left( \frac{\omega}{2}\right),
  \end{eqnarray}
respectively.
\end{lemma}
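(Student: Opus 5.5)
The plan is to evaluate $\int_0^\infty d\lambda\,\mathbb{E}\big[e^{i\omega n(\lambda)}\big]$ directly by conditioning on the levels, and to read off both series from its real and imaginary parts. Here $n(\lambda)$ is the number of points of the \emph{stationary} ${\rm Sine}_\beta$ process in $(0,\lambda]$, with the origin placed at a fixed point rather than on a level. The tool is the Palm inversion formula, which relates stationary averages to averages over the sequence seen from a typical level $\lambda_0$ (the random-level sampling of Eq.~(\ref{dI-ell})). With unit density, a stationary expectation of a functional $F$ of the configuration seen from the origin equals $\mathbb{E}^{0}\big[\int_{\lambda_0}^{\lambda_1} du\, F(\text{configuration seen from } u)\big]$. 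Here $\mathbb{E}^0$ is the Palm expectation with $\lambda_0=0$ and $s_1=\lambda_1$.

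First, fix a configuration with $\lambda_0=0$ and an origin $u\in(0,s_1)$. For $\lambda>0$, the count of levels in $(u,u+\lambda]$ equals $k$ exactly when $u+\lambda\in[\lambda_k,\lambda_{k+1})$, for $k\ge0$. Hence, pathwise, and with convergence handled below,
\begin{eqnarray*}
\int_0^\infty d\lambda\, e^{i\omega n_u(\lambda)} = (\lambda_1-u) + \sum_{k\ge1} e^{i\omega k} s_{k+1}.
\end{eqnarray*}
Integrating over $u\in(0,s_1)$ gives $s_1^2/2+\sum_{k\ge1}e^{i\omega k}s_1s_{k+1}$. Taking $\mathbb{E}^0$ and using stationarity of the spacings, $\mathbb{E}^0[s_1s_{k+1}]=1+\delta I_k^{(\beta)}$, we get
\begin{eqnarray*}
\int_0^\infty d\lambda\,\mathbb{E}\big[e^{i\omega n(\lambda)}\big] = \frac{1}{2}\delta I_0^{(\beta)} + \sum_{\ell\ge1}\delta I_\ell^{(\beta)} e^{i\omega\ell} + \Big(\frac12+\sum_{\ell\ge1}e^{i\omega\ell}\Big).
\end{eqnarray*}
The last bracket, summed in the Abel sense, is $\frac{1}{1-e^{i\omega}}-\frac12=\frac{i}{2}\cot(\omega/2)$, which is purely imaginary for $0<\omega<2\pi$. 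Taking real and imaginary parts yields Eqs.~(\ref{cos-F}) and (\ref{sin-F}).

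The main obstacle is justification, because none of the interchanges above is absolutely convergent. The series $\sum e^{i\omega k}s_{k+1}$ does not converge pathwise. In addition, $|\mathbb{E}[e^{i\omega n(\lambda)}]|$ decays only like a power $\lambda^{-\omega^2/(\beta\pi^2)}$, a consequence of the logarithmic variance, so the $\lambda$-integral need not converge absolutely for small $\omega$. I would regularize by inserting a damping factor: replace $\omega$ by $\omega+i\epsilon$ in the counting exponent, or multiply by $e^{-\epsilon\lambda}$. Everything is then absolutely convergent, since $\mathbb{E}^0[s_1s_{k+1}]$ is bounded and $\delta I_\ell^{(\beta)}=O(\ell^{-2})$ by Eq.~(\ref{beyond-Dyson}) (or by Dyson's decay for general $\beta$). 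Fubini and the Palm formula therefore apply at fixed $\epsilon>0$. Finally I would let $\epsilon\downarrow0$. On the covariance side this is dominated convergence using the absolute summability of $\delta I_\ell$, and on the geometric series it is the explicit Abel limit, valid for $\omega\notin2\pi\mathbb{Z}$. The improper integral on the right-hand side is then defined, or shown to exist, as this Abel limit. The $e^{-\epsilon\lambda}$ version changes the pathwise identity only by $O(\epsilon)$ corrections that vanish in the limit.
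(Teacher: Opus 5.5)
Your argument is correct and is essentially the paper's proof. Your regularization $\omega\to\omega+i\epsilon$ is the paper's Abel device $z=re^{i\omega}$ with $r=e^{-\epsilon}$; the only difference is that where the paper expands $\mathbb{E}[z^{n(\lambda)}]=\sum_\ell z^\ell E_\beta(\ell;\lambda)$ and cites Eq.~(\ref{cov-matrix-probs}), your Palm-inversion computation derives that relation directly, after which both proofs remove the same series $\frac{1+z}{2(1-z)}$, let $|z|\to1$, and take real and imaginary parts.
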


\begin{proof}
  The moment generating function of the point counting function $n(\lambda)$ inside the open unit disk, $|z|<1$, can be written as
  \begin{eqnarray} \label{MGF}
    \mathbb{E} \big[ z^{n(\lambda)} \big] = \sum_{\ell=0}^{\infty} z^\ell E_\beta(\ell;\lambda),
  \end{eqnarray}
  where $E_\beta(\ell;\lambda)$ is the probability that an interval of length $\lambda$ contains exactly $\ell$ levels. Integrating over $\lambda \ge 0$ and using the relation~\cite{M-2004,TRK-2024}
  \begin{eqnarray} \label{cov-matrix-probs}
    (1+\delta_{\ell,0}) \int_{0}^{\infty} d\lambda\, E_\beta(\ell;\lambda) =  \delta I_\ell^{(\beta)} + 1,
  \end{eqnarray}
  where $\delta I_\ell^{(\beta)}$ denotes the covariance matrix of level spacings [Eq.~(\ref{dI-ell})], we obtain:
\begin{eqnarray} \label{complex-series}
    \frac{1}{2} \delta I_0^{(\beta)} +\sum_{\ell=1}^{\infty} z^\ell \delta I_\ell^{(\beta)} = \int_{0}^{\infty} d\lambda\, \mathbb{E} \big[ z^{n(\lambda)} \big] - \frac{1+z}{2(1-z)}.
\end{eqnarray}
Equation (\ref{complex-series}) is valid in the interior of the unit disk, $z =r e^{i\omega}$ with $0\le r < 1$. Taking the limit $r \rightarrow 1^-$ to continuously extend the result to the boundary of the unit disk, $|z|=1$, and separating the real and imaginary parts of Eq.~(\ref{complex-series}), completes the proof.
\end{proof}
\begin{remark}
Pandey's sum rule [Eq.~(\ref{Pandey})] follows from Eq.~(\ref{cos-F}) after recognizing that its r.h.s. equals one half of the power spectrum of spacings $S^{\rm sp}_\infty (\omega)$. Taking the limit $\omega\rightarrow 0$ in this relation and utilizing the small-$\omega$ expansion in Eq.~(\ref{small-omega}), we reproduce Eq.~(\ref{Pandey}).
\end{remark}

\begin{proof}[Proof of Proposition~\ref{main-P}]
The sine Fourier series established in Lemma~\ref{Fourier-lemma} serves as the starting point. Rewriting it in the form
\begin{eqnarray} \label{sin-F-reg}\fl\quad
    \sum_{\ell=1}^{\infty} \left( \delta I_\ell^{(\beta)} + \frac{1}{\beta \pi^2 \ell^2}\right) \sin (\omega \ell)
    = {\rm Im\,} \int_{0}^{\infty} d\lambda \, \mathbb{E} \big[ e^{i\omega n(\lambda)} \big]
    -\frac{1}{2} \cot\left( \frac{\omega}{2}\right)
    + \frac{1}{\beta \pi^2} {\rm Cl}_2(\omega), \nonumber\\
{}
\end{eqnarray}
where
\begin{eqnarray}
{\rm Cl}_2(\omega) = - \int_{0}^{\omega} dt \log \left|
2 \sin \left(\frac{t}{2}\right)\right|
\end{eqnarray}
denotes the Clausen function~\cite{L-1981}, and differentiating with respect to $\omega$ (justified, for $\beta=2$, by the decay rate of the Fourier coefficients in Eq.~(\ref{beyond-Dyson})), we obtain
\begin{eqnarray} \label{cos-F-reg} \fl \qquad
    \sum_{\ell=1}^{\infty} \ell \left( \delta I_\ell^{(2)} + \frac{1}{2 \pi^2 \ell^2}\right) \cos (\omega \ell)
    = \frac{d}{d\omega} \Bigg\{ {\rm Im\,} \int_{0}^{\infty} d\lambda \, \mathbb{E} \big[ e^{i\omega n(\lambda)} \big] \nonumber\\
 \qquad\qquad\qquad\qquad\qquad
        -\frac{1}{2} \cot\left( \frac{\omega}{2}\right)
        + \frac{1}{2 \pi^2} {\rm Cl}_2(\omega)
    \Bigg\}.
\end{eqnarray}
Taken at $\omega=0$, the left-hand side yields the definition of $C_2^{(1)}$. On the right-hand side, the derivative at $\omega=0$ can be evaluated by virtue of Lemma~\ref{lemma-integral}, whose Eq.~(\ref{lemma-integral-eq}) implies that the expression inside the braces in Eq.~(\ref{cos-F-reg}), after expanding two more contributions from $\cot(\omega/2)$ and ${\rm Cl}_2(\omega)$, admits the expansion
\begin{eqnarray} \label{sin-F-reg-small} \fl\qquad\qquad
     \left( \frac{1}{12} -\frac{1}{2\pi^2} \log (2\pi)\right) \omega + o(\omega),
     \qquad \omega \to 0.
\end{eqnarray}
Therefore, the right-hand side of Eq.~(\ref{cos-F-reg}) at $\omega=0$ equals
\begin{eqnarray}
    \frac{1}{12} -\frac{1}{2\pi^2} \log (2\pi).
\end{eqnarray}
This completes the proof of Proposition~\ref{main-P}.
\end{proof}

\subsection{Variance of the $L$-th ordered eigenvalue and proof of the ``$1/6$'' formula}\label{Duality-section}

The asymptotic expansion of the variance of the $L$-th ordered eigenvalue, as $L\rightarrow\infty$, at $\beta=2$ is provided by Proposition~\ref{main-L}.

\begin{proof}[Proof of Proposition~\ref{main-L}] Following the conventions introduced prior to Eq.~(\ref{dI-ell}), the $L$-th ordered eigenlevel can be expressed as a sum of $L$ consecutive spacings,
\begin{eqnarray}\label{L-th-ordered}
    \lambda_L = \sum_{\ell=1}^L s_\ell.
\end{eqnarray}
Consequently, its variance takes the form
\begin{eqnarray} \label{var-theta-L}
    {\rm var}_2[\lambda_L] = \sum_{|\ell|\le L-1} \left(L- |\ell|\right) \delta I_{|\ell|}^{(2)}
    = L \sigma_0^{(2)}(L) -\sigma_1^{(2)}(L),
\end{eqnarray}
where we have defined the quantities (for $k=0,1$):
\begin{eqnarray}\label{sigma-k}
    \sigma_k^{(2)}(L) = \sum_{|\ell|\le L-1} |\ell|^k \delta I_{|\ell|}^{(2)}.
\end{eqnarray}
To develop the asymptotic expansions of $\sigma_k^{(2)}(L)$, we treat the two cases separately.

\noindent\newline\newline
(i) For $\sigma_0^{(2)}(L)$, we make use of Pandey's sum rule, Eq.~(\ref{Pandey}), to write down
\begin{eqnarray} \label{sigma-0}
        \sigma_0^{(2)}(L) =  C_2^{(0)} - 2\sum_{\ell=L}^{\infty} \delta I_\ell^{(2)}, \quad C_2^{(0)}=0.
\end{eqnarray}
Substituting Eq.~(\ref{beyond-Dyson}) and applying the Euler-Maclaurin formula yields, as $L\rightarrow\infty$:
\begin{eqnarray} \label{L-sigma-0}\fl \quad
        L \sigma_0^{(2)}(L) =  \frac{1}{\pi^2} + \frac{1}{2\pi^2 L}
        + \frac{1}{\pi^4 L^2}
        \left(  \log(2\pi L) + \gamma + \frac{\pi^2-9}{6}
        \right) + o(L^{-2}).
\end{eqnarray}
\noindent\newline
(ii) To treat $\sigma_1^{(2)}(L)$, we first regularize the sum with the help of the expansion Eq.~(\ref{beyond-Dyson}). Simple algebra reduces this to
\begin{eqnarray} \label{sigma-1}\fl \qquad
    \sigma_1^{(2)}(L) = 2 C_2^{(1)}
                - \frac{1}{\pi^2} \sum_{\ell=1}^{L-1} \frac{1}{\ell}
                - 2 \sum_{\ell=L}^{\infty} \ell \left( \delta I_\ell^{(2)} + \frac{1}{2 \pi^2 \ell^2} \right),
\end{eqnarray}
where $C_2^{(1)}$ is provided by the sum rule in Proposition~\ref{main-P}. Applying the Euler-Maclaurin formula to the series above, we derive:
\begin{eqnarray} \label{sigma-1-asymp}\fl \qquad
    \sigma_1^{(2)}(L) =   -\frac{1}{\pi^2}  \log L  + 2 \left( C_2^{(1)}  - \frac{\gamma}{2 \pi^2} \right)
               +\frac{1}{2\pi^2 L}    \nonumber\\
               \qquad +  \frac{3}{2 \pi^4 L^2} \left( \log(2\pi L) +\gamma +\frac{\pi^2-24}{18}\right) + o(L^{-2}).
\end{eqnarray}
Combining Eqs.~(\ref{var-theta-L}), (\ref{L-sigma-0}), (\ref{sigma-1-asymp}) and (\ref{L-1-2}) completes the proof.
\end{proof}

\begin{proof}[Proof of Theorem~\ref{main-T}] Equation~(\ref{var-2-lambda-L}) of Proposition~\ref{main-L} and the large-$L$ expansion of the number variance,
\begin{eqnarray}\label{var-2-int-L-exp}
    \fl\qquad
    {\rm var}_2 [{\mathcal N}(L)] = \frac{1}{\pi^2}\Big(\log (2\pi L)  +\gamma+1
    \Big) -\frac{1}{4\pi^4 L^2} + o(L^{-2}),
\end{eqnarray}
see Eq.~(\ref{var-2}), imply Theorem~\ref{main-T}.
\end{proof}

\subsection{Small-$\omega$ expansion of ${\rm Im\,}\mathbb{E} \big[ e^{i\omega n(\lambda)} \big]$ and related integral}
\label{technical-section}
The proof of Proposition~\ref{main-P} relies on the small-$\omega$ expansion of the integral in Eq.~(\ref{sin-F-reg}). This expansion is established in Lemma~\ref{lemma-integral} below, together with the auxiliary technical Lemma~\ref{L-g-Lemma}.

\begin{lemma}\label{lemma-integral}
Let $n(\lambda)$ be the counting function of the ${Sine}_2$ point process with unit mean local density. Then, as $\omega\to 0$, the following expansion holds:
\begin{eqnarray} \label{lemma-integral-eq}
    {\rm Im} \int_{0}^{\infty} d\lambda \,\mathbb{E} \big[ e^{i\omega n(\lambda)} \big] = \frac{1}{\omega}
    + \frac{\omega}{2\pi^2} \log\left( \frac{|\omega|}{2\pi} \right) - \frac{\omega}{2\pi^2} + o(\omega).
\end{eqnarray}
\end{lemma}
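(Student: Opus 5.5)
The plan is to compare the characteristic function $\mathbb{E}[e^{i\omega n(\lambda)}]$ with its large-$\lambda$ Fisher--Hartwig (Basor--Widom/Deift--Its--Krasovsky) asymptotic form, which for the ${\rm Sine}_2$ process reads
\begin{eqnarray*}
 F(\lambda;\omega)=e^{i\omega\lambda}\,(2\pi\lambda)^{-\omega^2/(2\pi^2)}\,
 G\Big(1+\frac{\omega}{2\pi}\Big)^2 G\Big(1-\frac{\omega}{2\pi}\Big)^2 ,
\end{eqnarray*}
where $G$ is the Barnes $G$-function; this is the form controlled in Ref.~\cite{RK-2023}. We split
\begin{eqnarray*}
 \int_0^\infty d\lambda\,\mathbb{E}\big[e^{i\omega n(\lambda)}\big]
 = \int_0^\infty d\lambda\,F(\lambda;\omega)
 + \int_0^\infty d\lambda\,\Big(\mathbb{E}\big[e^{i\omega n(\lambda)}\big]-F(\lambda;\omega)\Big).
\end{eqnarray*}
Both integrals are understood with the regularization $e^{-\epsilon\lambda}$, $\epsilon\to0^+$, which is consistent with the radial limit $r\to1^-$ used in Lemma~\ref{Fourier-lemma}. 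We will show that the first integral produces the entire right-hand side of Eq.~(\ref{lemma-integral-eq}), and that the imaginary part of the second is $o(\omega)$.

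\emph{Main term.} Put $a=\omega^2/(2\pi^2)$ and take $\omega>0$. The gamma integral gives
\begin{eqnarray*}
 \int_0^\infty d\lambda\, e^{i\omega\lambda}(2\pi\lambda)^{-a}=(2\pi)^{-a}\,\Gamma(1-a)\,(-i\omega)^{a-1},
 \qquad (-i\omega)^{a-1}=\frac{i}{\omega}\,\omega^{a}e^{-i\pi a/2}.
\end{eqnarray*}
We expand to first order in $a$, using $\Gamma(1-a)=1+\gamma a+O(a^2)$ and $(\omega/2\pi)^{a}=1+a\log(\omega/2\pi)+O(a^2\log^2\omega)$. The phase $e^{-i\pi a/2}$ only contributes $O(\omega)$ to the real part, so it drops out of the imaginary part at this order. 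For the Barnes factor we use
\begin{eqnarray*}
 \log G(1+x)+\log G(1-x)=-(1+\gamma)x^2+O(x^4),
\end{eqnarray*}
so the product of the four $G$'s equals $1-(1+\gamma)\omega^2/(2\pi^2)+O(\omega^4)$. Multiplying out and taking the imaginary part gives
\begin{eqnarray*}
 \frac{1}{\omega}+\frac{\omega}{2\pi^2}\Big(\log\frac{\omega}{2\pi}+\gamma\Big)-\frac{(1+\gamma)\,\omega}{2\pi^2}+o(\omega).
\end{eqnarray*}
The $\gamma$-terms cancel, leaving exactly Eq.~(\ref{lemma-integral-eq}). The case $\omega<0$ follows by oddness of the imaginary part, which accounts for the $|\omega|$ in the logarithm.

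\emph{Remainder, the main obstacle.} We must show that ${\rm Im}\int_0^\infty d\lambda\,(\mathbb{E}[e^{i\omega n}]-F)=o(\omega)$. Two facts make this plausible. First, at $\omega=0$ both functions equal $1$, so the difference vanishes there. Second, ${\rm Im}\,\mathbb{E}[e^{i\omega n}]=\mathbb{E}[\sin\omega n]$ is odd in $\omega$. At first order in $\omega$, the imaginary parts of both functions equal $\omega\lambda$, because $\mathbb{E}[n(\lambda)]=\lambda$ and $F$ has phase $\omega\lambda$; hence the $O(\omega)$ part of the remainder is identically zero. What remains is to bound the rest of the remainder uniformly in $\lambda$. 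For this we will invoke the uniform-in-$\omega$ version of the Fisher--Hartwig asymptotics (as in Ref.~\cite{RK-2023}, see also Ref.~\cite{IAC-2013}): for $|\omega|$ small and $\lambda\ge\lambda_0$,
\begin{eqnarray*}
 \mathbb{E}\big[e^{i\omega n(\lambda)}\big]=F(\lambda;\omega)\big(1+O(|\omega|\,\lambda^{-1})\big),
\end{eqnarray*}
where the correction consists of oscillating terms $\lambda^{-1\pm\omega/\pi}e^{\pm 2\pi i\lambda}$ with coefficients $O(\omega)$. For $\lambda\in[0,\lambda_0]$ we would instead use the cumulant expansion, which gives a contribution of $O(\omega^3)$. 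On the tail $\lambda\ge\lambda_0$, the oscillating corrections are integrated against $e^{i\omega\lambda\pm2\pi i\lambda}$, which stays away from resonance because $0<\omega<2\pi$. Integration by parts should therefore give a convergent contribution whose odd-in-$\omega$ imaginary part is $O(\omega^3\log^2|\omega|)$, i.e.\ $o(\omega)$. Establishing this uniformity, in particular that the error coefficient genuinely carries a factor vanishing with $\omega$ and that the resulting integral cannot feed a $\omega\log\omega$ term back in, is the delicate point. If the uniform estimate is only available as $O(\lambda^{-1})$ without an explicit factor of $\omega$, I would first try a dominated-convergence argument on $\omega^{-1}\,{\rm Im}(\mathbb{E}[e^{i\omega n}]-F)$.
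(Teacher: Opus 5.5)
Your main-term computation is correct, and it reorganizes the paper's bookkeeping in a legitimate way. The paper integrates only $e^{i\tilde\omega\lambda}\lambda^{-2\tilde\omega^2}$ in closed form (its $J_0$). It then picks up the plateau constant $\tilde\kappa_2(\infty)=(1+\gamma)/\pi^2$ from the $k=2$ cumulant term via Lemma~\ref{L-g-Lemma}; this constant is exactly the order-$\omega^2$ content of your Barnes factor.

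The gap is the remainder estimate. It carries all the analytic content, you leave it open, and the route you sketch would not close it.

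First, the correction to $F$ is misdescribed. Because $\mathbb{E}[n(\lambda)]=\lambda$ exactly, the relative correction has no $O(\omega)$ part at all. Its oscillating parts are of order $\omega^2\lambda^{-2}$ (up to $\lambda^{O(\omega)}$ factors); cf.\ the $-\cos\lambda/(\pi^2\lambda^2)$ term in Eq.~(\ref{kappa-2-expansion}). There is also a non-oscillating part, $-i\omega^3/(4\pi^4\lambda)$ in unit density, coming from $\kappa_3\simeq 3/(\pi^3\lambda)$ in Eq.~(\ref{kappa-3-expansion}). For that term, non-resonance with $e^{\pm 2\pi i\lambda}$ is irrelevant.

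Second, and more fundamentally, a bound of the form $|F|\,O(|\omega|^p\lambda^{-1})$ cannot be integrated absolutely to give $o(\omega)$ unless $p>3$. Since $|F|\approx(2\pi\lambda)^{-a}$ with $a=\omega^2/(2\pi^2)$, one has $\int_{\lambda_0}^\infty\lambda^{-1-a}\,d\lambda\sim 2\pi^2/\omega^2$. So your $O(|\omega|\lambda^{-1})$ yields $O(1/|\omega|)$, and even the genuine $O(|\omega|^3\lambda^{-1})$ piece yields only $O(\omega)$; factors $\log^M\lambda$ make this much worse.

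The pointwise vanishing of the $O(\omega)$ Taylor coefficient of ${\rm Im}(\mathbb{E}[e^{i\omega n}]-F)$ does not transfer to the integral. The $\omega$-expansion is not uniform in $\lambda$: it fails at $\lambda\sim 1/\omega$ and at $\log\lambda\sim\omega^{-2}$. Your dominated-convergence fallback also fails, because $\lambda^{-1}$ is not integrable at infinity.

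The missing idea is that the cutoff at $\lambda\sim 1/\omega$ must come from the slow phase $e^{i\omega\lambda}$ itself, through an integration by parts. That step needs control of the derivative of the correction, not its oscillation at frequency $2\pi$. Write
$\mathbb{E}[e^{i\omega n(\lambda)}]-F=e^{i\omega\lambda}(2\pi\lambda)^{-a}\,G(1+\omega/2\pi)^2G(1-\omega/2\pi)^2\big[\exp\big(\sum_{k\ge2}(i\omega)^k(d_k-d_k(\infty))/k!\big)-1\big]$,
with the $d_k$ evaluated at $2\pi\lambda$, and expand the bracket in powers of $\omega$. Every coefficient is a polynomial in the $d_m-d_m(\infty)$. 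It therefore satisfies the hypotheses of Lemma~\ref{L-g-Lemma} with $g_\infty=0$ (after rescaling $\mu=2\pi\lambda$), so its integral is $o(\omega^{-1})$. The prefactor $\omega^k$ with $k\ge 2$ then makes the whole remainder $o(\omega)$.

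This is precisely the mechanism of the paper's proof. With it your decomposition would go through, but as written the proposal establishes only the main term.
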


\begin{remark}
This complements our previous result
\begin{eqnarray} \label{real-integral-eq}
    {\rm Re} \int_{0}^{\infty} d\lambda \,\mathbb{E} \big[ e^{i\omega n(\lambda)} \big] = \frac{|\omega|}{4\pi}
    + \frac{|\omega|^3}{8\pi^3} \log\left( \frac{|\omega|}{2\pi} \right) + O(\omega^4),
\end{eqnarray}
which was derived in the context of the small-$\omega$ expansion of the power spectrum of the circular unitary ensemble; see Proposition~4.10 of Ref.~\cite{RK-2023}.
\end{remark}

\begin{proof}[Proof of Lemma~\ref{lemma-integral}] Noting that the imaginary part of the integral in Eq.~(\ref{lemma-integral-eq}) is an odd function of $\omega$, we assume below that $\omega>0$. Throughout the proof, we adopt the notation and technique of Ref.~\cite{RK-2023}. In particular, we introduce $\tilde\omega = \omega/2\pi$ and the counting function $\tilde{n}(\lambda)$ corresponding to the ${\rm Sine}_\beta$ process with mean density $\varrho = 1/2\pi$. The cumulant expansion
\begin{eqnarray} \label{MGF-p1}
    \log \mathbb{E} \big[ e^{2i\pi \tilde\omega \tilde{n}(\lambda)} \big] = \sum_{\ell=1}^{\infty}  \frac{(2i\pi)^\ell}{\ell!} \tilde\omega^\ell \kappa_\ell (\lambda),
\end{eqnarray}
where $\kappa_\ell (\lambda) = \langle\!\langle \tilde{n}^\ell(\lambda) \rangle\!\rangle$ denotes the $\ell$-th cumulant of the counting function $\tilde{n}(\lambda)$, is our starting point. As $\tilde\omega\to 0$, we have
\begin{eqnarray} \label{MGF-p2}\fl\qquad\qquad
    {\rm Im\,}\mathbb{E} \big[ e^{2i\pi\tilde\omega \tilde{n}(\lambda)} \big] = \frac{\sin(\tilde{\omega} \lambda)}{\lambda^{2\tilde{\omega}^2}}
        + {\rm Im}\left(
        \frac{e^{i\tilde\omega\lambda}}{\lambda^{2\tilde\omega^2}}
        \sum_{k\ge 2} (2 i \pi\tilde\omega)^k p_k(\lambda)
        \right)\!,
\end{eqnarray}
where~\footnote{For example,
\begin{eqnarray}\nonumber \fl\qquad
p_2(\lambda) = \frac{1}{2!} \tilde\kappa_2(\lambda),\;\;
p_3(\lambda) = \frac{1}{3!} \kappa_3(\lambda),\;\; p_4(\lambda) = \frac{1}{4!}
\left(\kappa_4(\lambda)+3\tilde\kappa_2^2(\lambda)\right)\!,\\
\fl \qquad p_5(\lambda) = \frac{1}{5!}\left( \kappa_5(\lambda) + 10 \tilde\kappa_2(\lambda)\kappa_3(\lambda)\right)\!,
\;\;
p_6(\lambda) = \frac{1}{6!}\left( \kappa_6(\lambda) + 15 \tilde\kappa_2(\lambda)\kappa_4(\lambda) + 10 \kappa_3^2(\lambda)
+ 15 \tilde\kappa_2^3(\lambda)\right)\!, \nonumber
\end{eqnarray}
etc.
}
\begin{eqnarray} \label{MGF-p3}
    p_k(\lambda) = \sum_{j=1}^{\lfloor k/2 \rfloor} \frac{1}{j!} \sum_{\substack{m_1+\cdots+m_j=k\\ m_\nu\ge 2}}
 \frac{d_{m_1}(\lambda)\cdots d_{m_j}(\lambda)}{m_1! \cdots m_j!}.
\end{eqnarray}
Here, $d_2(\lambda)=\tilde\kappa_2(\lambda)$, whereas $d_\ell(\lambda)=\kappa_\ell(\lambda)$ for $\ell\ge 3$. The oscillatory terms in Eq.~(\ref{MGF-p2}) originate from the mean
\begin{eqnarray}
    \kappa_1(\lambda) = {\mathbb E}[\tilde{n}(\lambda)] = \frac{\lambda}{2\pi}.
\end{eqnarray}
The power-law factor $\lambda^{-2\tilde{\omega}^2}$ is generated by the divergent part $\pi^{-2} \log\lambda$ of the number variance $\kappa_2(\lambda) = {\rm var}[\tilde{n}(\lambda)]$, while $\tilde{\kappa}_2(\lambda)$ denotes its regular part,
\begin{eqnarray} \label{kappa-2-regular}
    \tilde{\kappa}_2(\lambda) = \kappa_2(\lambda) - \frac{1}{\pi^2} \log \lambda.
\end{eqnarray}
For the explicit expression for $\kappa_2(\lambda) = {\rm var}_2 [{\mathcal N}(\lambda/2\pi)]$, we refer the reader to Eq.~(\ref{var-2}). Integrating the expansion in Eq.~(\ref{MGF-p2}), we write
\begin{eqnarray}\label{Im-integral}\fl\qquad\qquad
    {\rm Im\,} \int_{0}^{\infty} d\lambda \, \mathbb{E}\big[ e^{2i\pi\tilde\omega \tilde{n}(\lambda)} \big] =
    J_0(\tilde\omega) + {\rm Im} \sum_{k\ge 2} (2i\pi)^k \tilde\omega^k J_k(\tilde\omega),
\end{eqnarray}
where
\begin{eqnarray} \label{J0-def}
    J_0(\tilde\omega) = \int_{0}^{\infty} d\lambda \,\frac{\sin(\tilde\omega \lambda)}{\lambda^{2\tilde\omega^2}},
\end{eqnarray}
and
\begin{eqnarray}\label{J-reminder}
    J_k(\tilde\omega) = \int_{0}^{\infty} d\lambda \,\frac{e^{i\tilde\omega \lambda}}{\lambda^{2 \tilde\omega^2}} p_k(\lambda).
\end{eqnarray}
We now analyze the small-$\tilde\omega$ behavior of these integrals.

\noindent\newline\newline
{\it The integral $J_0(\tilde\omega)$.}---The exact evaluation
\begin{eqnarray}\label{J0-exact}
    J_0(\tilde\omega) = \Gamma\left(
        1 - 2\tilde{\omega}^2
    \right) \cos\left(\pi \tilde\omega^2 \right)\, \tilde\omega^{2\tilde\omega^2-1},\quad 0 \le \tilde\omega < \frac{1}{2},
\end{eqnarray}
yields the asymptotic expansion
\begin{eqnarray}\label{J0-asympt}
    J_0(\tilde\omega) = \frac{1}{\tilde\omega} + 2 \tilde{\omega} \log\tilde{\omega}+ 2\gamma  \tilde{\omega}  + O(\tilde{\omega}^3 \log^2\tilde\omega).
\end{eqnarray}
In the context of the proof of Proposition~\ref{main-P}, we retain terms up to and including $O(\tilde\omega)$.

\noindent\newline\newline
{\it Contributions of integrals $\tilde\omega^k J_k(\tilde\omega)$.}---For the forthcoming small-$\tilde\omega$ analysis, only soft estimates of the cumulants near $\lambda=0$ and as $\lambda\to\infty$ are needed. Near $\lambda=0$, we use the common bound
\begin{eqnarray}\label{d-m-0}
    d_m(\lambda) = O(|\log\lambda|), \qquad \lambda\to0^+, \quad m\ge2,
\end{eqnarray}
which in particular accommodates the logarithmic singularity of
$d_2(\lambda)=\tilde\kappa_2(\lambda)$. For $m\ge3$, the stronger estimate~\cite{RK-2023}
$d_m(\lambda)=O(\lambda)$ implies this bound.

For large $\lambda$, we retain the estimates
\begin{eqnarray}\label{d-m-infinity} \fl \qquad
    d_m(\lambda)=d_m(\infty)+O\left(\frac{\log^m \lambda}{\lambda}\right)\!,
    \quad
    d_m^\prime(\lambda)=O\left(\frac{\log^m \lambda}{\lambda^2}\right)\!,
    \quad \lambda\to\infty.
\end{eqnarray}
(Note that for odd values of $m$, the plateau term $d_m(\infty)$ vanishes; see Ref.~\cite{RK-2023}.) For $m=2,3$, Eq.~(\ref{d-m-infinity}) is a deliberately weakened version of the sharper asymptotics
\begin{eqnarray}\label{kappa-2-expansion}
    \tilde{\kappa}_2(\lambda) = \tilde{\kappa}_2(\infty) - \frac{\cos \lambda}{\pi^2 \lambda^2} + O(\lambda^{-3}), \quad
    \lambda \rightarrow \infty,
\end{eqnarray}
where
\begin{eqnarray}\label{kappa-2-plateau}
    \tilde{\kappa}_2(\infty) = \frac{1}{\pi^2} \left(
        1 + \gamma
    \right),
\end{eqnarray}
and~\footnote{See Eq.~(1.7) of Ref.~\cite{RK-2023}; Eq.~(\ref{kappa-3-expansion}) corrects a typo in that paper.}
\begin{eqnarray}\label{kappa-3-expansion}\fl\qquad
    \kappa_3(\lambda) = \frac{3}{\pi^3 \lambda} \left\{
    1 + 2 \frac{\sin \lambda}{\lambda}(\log \lambda+\gamma)
    \right\} + O\left( \frac{\log\lambda}{\lambda^3} \right)\!, \quad
    \lambda \rightarrow \infty,
\end{eqnarray}
respectively. This weakening is convenient because it allows all cumulants, including $\tilde\kappa_2(\lambda)$, to be treated on the same footing. We now decompose
\begin{eqnarray}\label{p-k-decomposition}
    p_k(\lambda) = p_k^{\rm lin}(\lambda) + p_k^{\rm prod}(\lambda),
\end{eqnarray}
where
\begin{eqnarray}
    p_k^{\rm lin}(\lambda) = \frac{d_k(\lambda)}{k!}
\end{eqnarray}
is the term linear in $d_{(\cdot)}(\lambda)$, whereas $p_k^{\rm prod}(\lambda)$ collects the genuine product terms~\footnote{The product terms first appear at weighted degree $k=4$, since $p_k^{\rm prod}=0$ for $k=2,3$.}, as given by Eq.~(\ref{MGF-p3}) with $j\ge 2$.

\noindent\newline\newline
{\it (i) Terms linear in $d_{(\cdot)}(\lambda)$.} For every $k\ge 2$, Lemma~\ref{L-g-Lemma}
applies to $g=d_k$ in view of Eqs.~(\ref{d-m-0}) and (\ref{d-m-infinity}). Hence, as
$\tilde\omega\to0$,
\begin{eqnarray}\label{L-d-k}
L_{d_k}(\tilde\omega) = \int_0^\infty d\lambda \frac{e^{i\tilde\omega \lambda}}{\lambda^{2\tilde\omega^2}}\, d_k(\lambda)
= \frac{i\,d_k(\infty)}{\tilde\omega} + o(\tilde\omega^{-1}),
\end{eqnarray}
and therefore
\begin{eqnarray}\label{p-k-lin}
 \tilde\omega^k \int_{0}^{\infty} d\lambda \,\frac{e^{i\tilde\omega \lambda}}{\lambda^{2 \tilde\omega^2}} p_k^{\rm lin}(\lambda)
 = \frac{i\,d_k(\infty)}{k!}\,\tilde\omega^{k-1} + o(\tilde\omega^{k-1}).
\end{eqnarray}
In particular, for $k=2$ this yields
\begin{eqnarray} \label{p-2-lin}
    {\rm Im} \left[ (2i\pi)^2 \tilde\omega^2 \int_{0}^{\infty}
    d\lambda \frac{e^{i\tilde\omega \lambda}}{\lambda^{2\tilde\omega^2}} p_2^{\rm lin}(\lambda)\right] =
    -2(1+\gamma) \tilde\omega + o(\tilde\omega).
\end{eqnarray}

This is the only contribution beyond $J_0(\tilde\omega)$ in Eq.~(\ref{Im-integral}) that contributes nontrivially to the limit $\tilde\omega\to 0$ in the proof of Proposition~\ref{main-P}; see the discussion below Eq.~(\ref{cos-F-reg}). Indeed, for every $k\ge 3$, Eq.~(\ref{p-k-lin}) yields
\begin{eqnarray} \label{p-k-g-3-lin}
    (2i\pi)^k \tilde\omega^k \int_{0}^{\infty}
    d\lambda \frac{e^{i\tilde\omega \lambda}}{\lambda^{2\tilde\omega^2}} p_k^{\rm lin}(\lambda)
    =
    O(\tilde\omega^{k-1}) = o(\tilde\omega),
\end{eqnarray}
so all linear terms with $k\ge 3$ are negligible.

\noindent\newline\newline
{\it (ii) Product terms.} Fix a monomial
\begin{eqnarray} \label{g-multi-def}
    g_{m_1,\dots, m_j}(\lambda) = d_{m_1}(\lambda)\cdots d_{m_j}(\lambda),
\end{eqnarray}
where $m_1+\cdots + m_j=k$, $m_\nu\ge 2$, and $j\ge 2$. Equations~(\ref{d-m-0}) and (\ref{d-m-infinity}) imply the expansions
\begin{eqnarray} \label{g-multi-0}
    g_{m_1,\dots, m_j}(\lambda) = O(|\log\lambda|^k),\qquad \lambda\to0^+,
\end{eqnarray}
and
\begin{eqnarray}\fl\qquad \label{g-multi-infty}
    g_{m_1,\dots, m_j}(\lambda) = g_{m_1,\dots, m_j}(\infty) + O\left(\frac{\log^k \lambda}{\lambda}\right)\!\!,
    \quad g_{m_1,\dots, m_j}^\prime(\lambda) = O\left(\frac{\log^k \lambda}{\lambda^2}\right)\!\!
\end{eqnarray}
as $\lambda\to\infty$. Hence, Lemma~\ref{L-g-Lemma} applies to each such monomial, yielding
\begin{eqnarray}\fl\qquad
    L_{g_{m_1,\dots, m_j}}(\tilde\omega) = \int_0^\infty d\lambda \frac{e^{i\tilde\omega \lambda}}{\lambda^{2\tilde\omega^2}} \, g_{m_1,\dots, m_j}(\lambda) =
    \frac{i g_{m_1,\dots, m_j}(\infty)}{\tilde\omega} + o(\tilde\omega^{-1}).
\end{eqnarray}
Multiplying by $\tilde\omega^k$, we conclude that every product contribution of total weighted degree $k$ is $O(\tilde\omega^{k-1})$, so that
\begin{eqnarray}\label{p-k-prod}
    (2i\pi)^k \tilde\omega^k \int_{0}^{\infty}
    d\lambda \frac{e^{i\tilde\omega \lambda}}{\lambda^{2\tilde\omega^2}} p_k^{\rm prod}(\lambda)
    =
    O(\tilde\omega^{k-1}) = o(\tilde\omega).
\end{eqnarray}
Here, we have used the fact that $j\ge 2$ entails $k\ge 4$.

\noindent\newline\newline
{\it Combining all contributions.}---Equations~(\ref{Im-integral}), (\ref{J0-def}), (\ref{J-reminder}), (\ref{J0-asympt}), (\ref{p-k-decomposition}), (\ref{p-2-lin}), (\ref{p-k-g-3-lin}), and (\ref{p-k-prod}) yield
\begin{eqnarray}\label{Im-integral-final}    \fl\qquad
    {\rm Im\,} \int_{0}^{\infty} d\lambda \, \mathbb{E}\big[ e^{i \omega n(\lambda)} \big]
    =  \frac{1}{2\pi}
    {\rm Im\,} \int_{0}^{\infty} d\lambda \, \mathbb{E}\big[ e^{2i\pi\tilde\omega \tilde{n}(\lambda)} \big] \nonumber\\
    \fl\qquad\qquad\qquad\qquad\qquad = \frac{1}{2\pi}
    \left(\frac{1}{\tilde\omega} + 2\tilde\omega \log\tilde\omega - 2\tilde\omega\right) + o(\tilde\omega).
\end{eqnarray}
Finally, substituting $\tilde\omega=\omega/2\pi$ completes the proof.
\end{proof}

\begin{lemma}\label{L-g-Lemma}
Let
\begin{equation}\label{L-g-int}
L_g(\tilde\omega)
=
\int_0^\infty d\lambda
\frac{e^{i\tilde\omega \lambda}}{\lambda^{a^2\tilde\omega^2}} \, g(\lambda),
\qquad \tilde\omega>0,
\end{equation}
where $a>0$, and $g(\lambda)$ is a differentiable function on $(0,\infty)$ that satisfies the following conditions: (i) there exists $P\ge 0$ such that $g(\lambda)=O(|\log \lambda|^P)$ as $\lambda \rightarrow 0^+$, and (ii) there exist $g_\infty \in {\mathbb R}$ and $M\ge 0$ such that
\begin{eqnarray}\label{g-asymptotics}
    g(\lambda) = g_\infty + O\left( \frac{\log^M \lambda}{\lambda} \right)\!\!,\quad
    g^\prime(\lambda) = O\left( \frac{\log^M \lambda}{\lambda^2} \right)\!\!,
\end{eqnarray}
as $\lambda\rightarrow\infty$. Then, as $\tilde\omega \rightarrow 0^+$,
\begin{equation}\label{L-g-statement}
L_g(\tilde\omega)
= \frac{i g_\infty}{\tilde\omega} + o(\tilde\omega^{-1}).
\end{equation}
\end{lemma}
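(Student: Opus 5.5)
The plan is to split $g = g_\infty + h$, where $h(\lambda) = g(\lambda) - g_\infty$, and treat the two pieces separately, so that $L_g = g_\infty L_1 + L_h$.

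\emph{The constant piece.} Write $\epsilon = a^2\tilde\omega^2 \in (0,1)$. The constant part is evaluated exactly, as in Eq.~(\ref{J0-exact}), by rotating the contour $\lambda \mapsto i t/\tilde\omega$. This is legitimate since the integrand is $O(\lambda^{-\epsilon})$ and oscillatory at infinity. One obtains
\begin{equation*}
L_1(\tilde\omega)=\int_0^\infty d\lambda\, e^{i\tilde\omega\lambda}\lambda^{-\epsilon}=\Gamma(1-\epsilon)\, e^{i\pi(1-\epsilon)/2}\,\tilde\omega^{\epsilon-1}.
\end{equation*}
Since $\epsilon\to0$ and $\tilde\omega^{\epsilon}=e^{a^2\tilde\omega^2\log\tilde\omega}\to 1$, this equals $i/\tilde\omega + o(\tilde\omega^{-1})$. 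This piece produces the claimed leading term.

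\emph{The remainder: goal.} It then remains to show $L_h(\tilde\omega)=o(\tilde\omega^{-1})$; in fact I would aim for the stronger bound $O(\log^{M+1}(1/\tilde\omega))$. I would split the integral at $\lambda=1$.

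\emph{The region $\lambda\in(0,1]$.} Here $|e^{i\tilde\omega\lambda}\lambda^{-\epsilon}h(\lambda)|\le \lambda^{-1/2}\,C(1+|\log\lambda|^P)$ for small $\tilde\omega$, using condition (i). This is integrable, so this piece is $O(1)$ uniformly in $\tilde\omega$.

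\emph{The region $\lambda\ge1$: main obstacle.} The bound $|h|=O(\log^M\lambda/\lambda)$ alone gives a divergent integral, so oscillation must be exploited. I would split further at $\lambda=1/\tilde\omega$.
\begin{itemize}
\item On $[1,1/\tilde\omega]$, estimate crudely using $\lambda^{-\epsilon}\le1$:
\begin{equation*}
\int_1^{1/\tilde\omega}\frac{\log^M\lambda}{\lambda}\,d\lambda=O\big(\log^{M+1}(1/\tilde\omega)\big).
\end{equation*}
\item On $[1/\tilde\omega,\infty)$, integrate by parts using $e^{i\tilde\omega\lambda}=(i\tilde\omega)^{-1}\,d e^{i\tilde\omega\lambda}/d\lambda$. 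The boundary term at $\lambda=1/\tilde\omega$ is $\tilde\omega^{-1}|h(1/\tilde\omega)|=O(\log^M(1/\tilde\omega))$; the boundary term at infinity vanishes because $h\to0$.
\item The integrated term after the integration by parts involves $\tilde\omega^{-1}\int_{1/\tilde\omega}^\infty |(\lambda^{-\epsilon}h)'|\,d\lambda$. Using $|h'|=O(\log^M\lambda/\lambda^2)$ from condition (ii) and $|\epsilon\lambda^{-\epsilon-1}h|\le \epsilon\log^M\lambda/\lambda^2$, this is bounded by $\tilde\omega^{-1}\cdot O(\tilde\omega\log^M(1/\tilde\omega))=O(\log^M(1/\tilde\omega))$.
\end{itemize}
The only care needed here is uniformity in $\epsilon$ of these bounds; it holds because $\lambda^{-\epsilon}\le 1$ for $\lambda\ge1$.

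\emph{Conclusion.} Collecting the estimates gives $L_h(\tilde\omega)=O(\log^{M+1}(1/\tilde\omega))=o(\tilde\omega^{-1})$. Combined with the evaluation of $g_\infty L_1$, this yields Eq.~(\ref{L-g-statement}).
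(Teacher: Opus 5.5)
Your proof is correct and shares the paper's skeleton: the same split $L_g=g_\infty L_0+\int_0^1+\int_1^\infty$, the same exact Gamma-function evaluation of the constant piece, and the same domination of the integrand on $(0,1)$ by $\lambda^{-1/2}(1+|\log\lambda|^P)$.

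The genuine difference is the tail $\int_1^\infty$. The paper integrates by parts over the whole half-line $[1,\infty)$, see Eq.~(\ref{L-g-3rd}). This produces a boundary term and an integral term, each generically of order $\tilde\omega^{-1}$. After multiplying by $\tilde\omega$ and passing to the limit by dominated convergence, using the uniform $O(\log^M\lambda/\lambda^2)$ bound on the derivative, the two leading contributions cancel because $\int_1^\infty g'\,d\lambda=g_\infty-g(1)$.

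You instead cut at the natural scale $\lambda=1/\tilde\omega$. On $[1,1/\tilde\omega]$ the absolute bound costs only $O(\log^{M+1}(1/\tilde\omega))$. On $[1/\tilde\omega,\infty)$ integration by parts already gives a boundary term and an integral term of size $O(\log^{M}(1/\tilde\omega))$, so no cancellation of $O(\tilde\omega^{-1})$ terms is needed.

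The paper's route is slightly shorter but purely qualitative. Yours costs one extra split and in return gives the explicit remainder $L_g(\tilde\omega)-i g_\infty/\tilde\omega=O(\log^{M+1}(1/\tilde\omega))$, which is more than the lemma asks for.

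One small detail to add: the bound $|h|\le C\log^M\lambda/\lambda$ from condition (ii) holds only for $\lambda\ge\lambda_0$. On $[1,\lambda_0]$, $h$ is bounded by continuity and contributes $O(1)$.
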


\begin{proof}
Split the integral $L_g(\tilde\omega)$ as
\begin{eqnarray}\label{L-g-split}
L_g(\tilde\omega)
    = g_\infty L_0(\tilde\omega) &+& \int_0^1 d\lambda
\frac{e^{i\tilde\omega \lambda}}{\lambda^{a^2\tilde\omega^2}} \left( g(\lambda)-g_\infty \right) \nonumber\\
&+&
\int_1^\infty d\lambda
\frac{e^{i\tilde\omega \lambda}}{\lambda^{a^2\tilde\omega^2}} \left( g(\lambda)-g_\infty \right)\!,
\end{eqnarray}
where
\begin{eqnarray}\label{L-0}\fl\qquad
L_0(\tilde\omega) = \int_0^{\infty} d\lambda \, \frac{e^{i\tilde\omega \lambda}}{\lambda^{a^2\tilde\omega^2}} =
i \, e^{-i\pi a^2\tilde\omega^2/2} \,\tilde\omega^{a^2\tilde\omega^2-1}\Gamma(1-a^2\tilde\omega^2), \quad a^2\tilde\omega^2<1,
\end{eqnarray}
and consider each term separately. As $\tilde\omega\rightarrow 0^+$, the first term yields
\begin{eqnarray}\label{L-0-expansion-0}
   g_\infty L_0(\tilde\omega) = \frac{i g_\infty}{\tilde\omega} + O(\tilde\omega \log\tilde\omega).
\end{eqnarray}
To treat the compact integral in Eq.~(\ref{L-g-split}), we note that, by assumption~(i),
one has $g(\lambda)-g_\infty = O(|\log \lambda|^P)$ as $\lambda\to0^+$. For $\tilde\omega$
sufficiently small, it holds that $\lambda^{-a^2\tilde\omega^2}\le \lambda^{-1/2}$, so the integrand is dominated by
an integrable function on $(0,1)$. Therefore,
\begin{eqnarray} \label{L-g-2nd}
\int_0^1 d\lambda\,
\frac{e^{i\tilde\omega \lambda}}{\lambda^{a^2\tilde\omega^2}}
\left( g(\lambda)-g_\infty \right)
= O(1),
\qquad \tilde\omega\to0^+.
\end{eqnarray}
To analyze the tail integral in Eq.~(\ref{L-g-split}), we integrate it by parts:
\begin{eqnarray}\label{L-g-3rd}
   \int_1^\infty d\lambda
    \frac{e^{i\tilde\omega \lambda}}{\lambda^{a^2\tilde\omega^2}} \left( g(\lambda)-g_\infty \right) &=& -\frac{e^{i\tilde\omega}}{i\tilde\omega}\left(
        g(1) - g_\infty
        \right) \nonumber\\
        &-& \frac{1}{i\tilde\omega} \int_{1}^{\infty} d\lambda \, e^{i\tilde\omega \lambda}
        \frac{d}{d\lambda} \left( \frac{g(\lambda) - g_\infty}{\lambda^{a^2\tilde\omega^2}}
        \right)\!\!.
\end{eqnarray}
By assumption~(\ref{g-asymptotics}), the integrand on the right-hand side of Eq.~(\ref{L-g-3rd}) is
\begin{eqnarray}
    O\left(
        \frac{\log^M\lambda}{\lambda^2}\right),\quad \lambda\rightarrow\infty
\end{eqnarray}
uniformly for $\tilde\omega$ small. Thus, after multiplying Eq.~(\ref{L-g-3rd}) by $\tilde\omega$, dominated convergence applies to the last integral. Using Eqs.~(\ref{L-g-split}), (\ref{L-0-expansion-0}), (\ref{L-g-2nd}), and (\ref{L-g-3rd}), we obtain
\begin{eqnarray}
    \lim_{\tilde\omega\rightarrow 0^+} \tilde\omega L_g(\tilde\omega) = i g_\infty + i \left( g(1)-g_\infty \right)
    + i \int_{1}^{\infty} d\lambda\, g^\prime(\lambda).
\end{eqnarray}
Cancellation of the last two terms yields
\begin{eqnarray} \label{L-g-pre-final}
    \lim_{\tilde\omega\rightarrow 0^+} \tilde\omega L_g(\tilde\omega) = i g_\infty.
\end{eqnarray}
This completes the proof, since Eq.~(\ref{L-g-pre-final}) is equivalent to Eq.~(\ref{L-g-statement}).
\end{proof}
\noindent
{\textbf{Acknowledgements.}}---The authors thank F.~Bornemann for providing us with the MATLAB package for numerical evaluation of Fredholm determinants and helpful correspondence. This work was supported by the Israel Science Foundation through the Grant No. 956/24. \newpage

\section*{References}
\fancyhead{} \fancyhead[RE,LO]{References}
\fancyhead[LE,RO]{\thepage}

\end{document}